\documentclass[11pt,english]{article}
\usepackage[latin9]{inputenc}
\usepackage[margin=1in]{geometry}
\usepackage{amsthm}
\usepackage{amsmath,amssymb}
\usepackage[usenames,dvipsnames]{xcolor}

\newcommand{\ignore}[1]{}%

\newif\ifFULL
\FULLtrue

\makeatletter
  \theoremstyle{definition}
  \newtheorem{defn}{\protect\definitionname}
  \theoremstyle{plain}
  \newtheorem{conjecture}{\protect\conjecturename}
  \theoremstyle{plain}
  \newtheorem*{thm*}{\protect\theoremname}
  \theoremstyle{plain}
  
  \theoremstyle{remark}
  
  \theoremstyle{plain}
  \newtheorem{lem}{\protect\lemmaname}
\theoremstyle{plain}
\newtheorem{thm}{\protect\theoremname}
\theoremstyle{plain}
  \newtheorem*{lem*}{\protect\lemmaname}

\def\epsilon{\varepsilon}

\usepackage{complexity,color,bm}
\usepackage{colortbl}

\newcommand{\e}{\mathrm{e}}

\definecolor{gray-comment}{gray}{0.5}

\protected\edef\mathbb{%
  \unexpanded\expandafter\expandafter\expandafter{%
    \csname mathbb \endcsname
  }%
}

\theoremstyle{plain}

\makeatletter
\newtheorem*{rep@theorem}{\rep@title}
\newcommand{\newreptheorem}[2]{%
\newenvironment{rep#1}[1]{%
 \def\rep@title{#2 \ref{##1}}%
 \begin{rep@theorem}}%
 {\end{rep@theorem}}}
\makeatother

\newreptheorem{rcor}{Corollary}
\newreptheorem{rthm}{Theorem}

\usepackage{thmtools}
\usepackage{thm-restate}


\makeatother

\usepackage{babel}
  \providecommand{\claimname}{Claim}
  \providecommand{\conjecturename}{Conjecture}
  \providecommand{\definitionname}{Definition}
  \providecommand{\lemmaname}{Lemma}
  \providecommand{\theoremname}{Theorem}
\providecommand{\corollaryname}{Corollary}
\providecommand{\theoremname}{Theorem}

\begin{document}

\author{Yakov Babichenko\thanks{Technion. Email:yakovbab@tx.technion.ac.il}
\and Christos Papadimitriou\thanks{University of California, Berkeley. Email: christos@cs.berkeley.edu}
\and Aviad Rubinstein\thanks{University of California, Berkeley. Email: aviad@cs.berkeley.edu}
}

\title{\bf Can Almost Everybody be Almost Happy?\\
PCP for PPAD and the Inapproximability of Nash}
\date{}


\thispagestyle{empty}\maketitle\setcounter{page}{0}

\begin{abstract}\noindent
We conjecture that \PPAD\ has a PCP-like complete problem, seeking a near equilibrium in which all but very few players have very little incentive to deviate.  We show that, if one assumes that this problem requires exponential time, several open problems in this area are settled.  The most important implication, proved via a ``birthday repetition'' reduction,  is that the $n^{O(\log n)}$ approximation scheme of Lipton et al.~\cite{LMM03_quasi_poly} for the Nash equilibrium of two-player games {\em is essentially optimum.}   Two other open problems in the area are resolved once one assumes this conjecture, establishing that certain approximate equilibria are \PPAD-complete:  Finding a {\em relative} approximation of two-player Nash equilibria (without the well-supported restriction of \cite{Das13_multiplicative_hardness}), and an approximate competitive equilibrium with equal incomes \cite{ACEEI_Bud11} with small clearing error and near-optimal Gini coefficient.
\end{abstract}

\thispagestyle{empty}
\newpage

\section{Introduction}

It is known that finding an $\epsilon$-approximate Nash equilibrium in a two-person game:
\begin{enumerate}
\item is \PPAD-complete if $\epsilon$ is inversely polynomial \cite{2-player_nash_CDT09};
\item but can be solved in quasipolynomial time for any fixed $\epsilon>0$ \cite{LMM03_quasi_poly}; and
\item the smallest known polynomially attainable approximation ratio is still over $3\over 10$ \cite{TS08-approx_nash}.  
\end{enumerate}
These three facts articulate rather dramatically the mystery, by now almost a decade old, of the problem's approximability.  

Can the inversely polynomial inapproximability bound of \cite{2-player_nash_CDT09} be improved to constant?  Unlikely, because by \cite{LMM03_quasi_poly} this would imply a quasipolynomial algorithm for the iconic problem in PPAD:
\begin{defn}{\sc End of the Line}: (\cite{NASH-is-PPAD-hard_DGP09}) 
Let $S$ and $P$ be two circuits with (at most) $\tilde{O}(n)$ gates each (computing the predecessor and successor correspondences) with $n$ input bits and $n$ output bits each, such that $P\left(0^{n}\right)=0^{n}\neq S\left(0^{n}\right)$, find an input $x\in\left\{ 0,1\right\} ^{n}$ such that $P\left(S\left(x\right)\right)\neq x$ or $S\left(P\left(x\right)\right)\neq x\neq0^{n}$.
\end{defn}
Conventional wisdom, supported by black-box lower bounds \cite{HPV89,BCEIP98-relativized}, is that this problem requires exponential time to solve.  In direct analogy with the exponential time hypothesis \cite{IP01-ETH}, we state the following:

\begin{conjecture}
\label{conj:eth}
{\sc End of the Line} requires $2^{\tilde\Omega(n)}$ time.
\end{conjecture}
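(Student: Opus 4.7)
The statement is explicitly posed as a hypothesis analogous to the Exponential Time Hypothesis, so an unconditional proof would constitute a major complexity-theoretic breakthrough (in particular, it would separate \PPAD\ from \P, which is stronger than $\P \neq \NP$ since $\PPAD \subseteq \TFNP$). I would therefore treat the task as planning a \emph{principled defense} of the conjecture, pursued along two complementary lines rather than a genuine derivation.

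First, I would catalogue and strengthen the existing evidence. The black-box lower bounds of \cite{HPV89,BCEIP98-relativized} already establish that any algorithm solving {\sc End of the Line} must make $2^{\Omega(n)}$ queries when $S$ and $P$ are presented as oracles. The obvious plan is to try to lift these query lower bounds to the explicit-circuit setting. Concretely, one would attempt to instantiate the hard oracle distribution by a pseudorandom family of $\tilde{O}(n)$-size circuits, so that polynomial-time algorithms with access to the circuit description do no better than query algorithms with oracle access. This looks similar in spirit to the cryptographic passage from information-theoretic to computational security, and is the main obstacle: it is not clear what hardness assumption (one-way functions? indistinguishability obfuscation?) is the ``right'' ingredient for a PPAD-style search problem, and existing constructions of hard distributions on {\sc End of the Line} are known only in idealized or heuristic models.

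Second, I would attempt a direct reduction from a more-established hardness hypothesis. The natural candidate is ETH itself: one would reduce $3$-SAT on $n$ variables to {\sc End of the Line} on $\tilde{O}(n)$ input bits. The standard chain of reductions through \cite{NASH-is-PPAD-hard_DGP09} establishes \PPAD-hardness, so the issue is purely quantitative. I would re-examine each intermediate problem (graphical games, imitation gadgets, arithmetic-circuit gadgets, Brouwer encoding) and track the blow-up in bit-length; in the textbook constructions the blow-up is polynomial, not polylogarithmic. The hard step is the arithmetic/Brouwer encoding, where each binary variable is replicated across a high-dimensional grid; shaving this down to quasilinear size would likely require linear-size PCP-style amplifications tailored to the PPAD setting (something very much in the spirit of the ``PCP for PPAD'' conjecture the paper itself entertains).

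Realistically, neither line is within reach of current techniques, and indeed this is why the authors state the claim as a conjecture rather than a theorem. My proposal would therefore conclude by recommending: (i) adopting the conjecture as an axiomatic hypothesis in the body of the paper (as done here), (ii) noting that any future derivation from ETH would automatically upgrade all downstream results from conditional to ETH-conditional, and (iii) flagging the quasilinear reduction from SAT to {\sc End of the Line} as the concrete technical target whose resolution would most directly support the conjecture.
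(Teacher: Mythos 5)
You have correctly recognized that the statement is a conjecture, not a theorem, and that the paper offers no proof — only the black-box lower bounds of \cite{HPV89,BCEIP98-relativized} and the analogy with ETH \cite{IP01-ETH} as motivation. Your proposal matches the paper's stance exactly: treat it as an axiomatic hardness hypothesis supported by query-complexity evidence, and note that an unconditional derivation is out of reach.

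One small correction worth flagging: you write that proving the conjecture ``would separate \PPAD\ from \P, which is stronger than $\P \neq \NP$ since $\PPAD \subseteq \TFNP$.'' The containment $\PPAD \subseteq \TFNP$ does not make $\PPAD \neq \P$ \emph{stronger} than $\P \neq \NP$; in fact, neither implication ($\PPAD \neq \P \Rightarrow \P \neq \NP$ or the converse) is known. What is true is that $\PPAD \neq \P$ \emph{implies} $\P \neq \NP$ under the (unproven) assumption that $\NP = \coNP$ fails or more directly because $\P = \NP$ would collapse $\TFNP$ into $\FP$ and hence $\PPAD$ into $\FP$; so $\PPAD \neq \P$ is at least as hard to prove as $\P \neq \NP$, but ``stronger'' is not the right word. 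This does not affect your overall conclusion.
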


It was recently shown that this conjecture implies an identical lower bound for the {\sc Gcircuit} problem \cite{2-player_nash_CDT09,NASH-is-PPAD-hard_DGP09}: Recall
that a standard arithmetic circuit has inputs, outputs, and gates
that define arithmetic operations on its lines. In a generalized circuit there are no outputs, and
the dependencies between values become cyclical, thereby inducing a
constraint satisfaction problem: the variables correspond to the values
on the lines, and each gate induces a constraint on the incoming
and outgoing lines. The $\epsilon$-{\sc Gcircuit} is defined to be the problem of finding values on the lines that $\epsilon$-approximately satisfy all the 
constraints induced by the arithmetic gates.  It was recently shown \cite{Rub14-polymatrix-Nash} that $\epsilon$-{\sc Gcircuit} is \PPAD-complete for some small constant $\epsilon$, via $\tilde O(n)$ reductions.  Therefore, Conjecture 1 implies that, for some small $\epsilon>0$, roughly exponential time is required to find a solution to this problem.
There are several known ways to further reduce the $\epsilon$-{\sc Gcircuit} problem to the two-player Nash equilibrium problem; however, all of these reductions fail to preserve the quality of approximation, and, as we pointed out above, it is likely that no approximation-preserving reductions are possible, as this would imply a quasipolynomial-time algorithm for {\sc End of the Line} through the algorithm in \cite{LMM03_quasi_poly}, contradicting Conjecture 1.

{\em In this paper, we identify a plausible new conjecture, a strengthening of Conjecture 1 in a natural and novel direction, which implies that the quasipolynomial approximation algorithm of \cite{LMM03_quasi_poly} is optimum.}   In particular, we define the $(\varepsilon,\delta)$-{\sc Gcircuit} to be the problem of finding values for the variables (lines) that $\varepsilon$-approximately satisfy a fraction {\em of at least $(1-\delta)$ of the constraints (gates).}

\begin{conjecture}\label{conj:pcp}
There exist constants $\epsilon,\delta>0$
such that there is a quasilinear reduction from {\sc End of the Line} to the problem $(\epsilon,\delta)$-{\sc Gcircuit}.  Therefore, also assuming Conjecture 1, $(\epsilon,\delta)$-{\sc Gcircuit} requires $2^{\tilde\Omega(n)}$ time.
\end{conjecture}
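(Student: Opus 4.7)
The plan is to build a PCP-style gap amplification for $\epsilon$-{\sc Gcircuit}, in the spirit of Dinur's combinatorial proof of the PCP theorem. Starting from the $\tilde{O}(n)$-reduction to constant-$\epsilon$ $\epsilon$-{\sc Gcircuit} of \cite{Rub14-polymatrix-Nash}, I would view each generalized circuit as a constraint graph whose vertices are lines carrying real values and whose edges are gate-constraints. The goal is to amplify the \emph{combinatorial} gap---the fraction of violated gates---while keeping the \emph{arithmetic} tolerance $\epsilon$ constant, so that the output instance either is $\epsilon$-satisfiable everywhere or violates at least a $\delta$-fraction of gates by more than $\epsilon$ under every assignment.

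First I would regularize the constraint graph by replacing each line of high fan-out with a cloud of copies connected through a constant-degree expander, with copy-equality gates enforcing near-agreement of the real values they carry. Second, I would perform a parallelization/graph-powering step: bundle each line with its $t$-neighborhood in the constraint graph, and replace each gate by a combined constraint that checks all original gates within that neighborhood. A standard expander-walk analysis then should turn a single persistent local violation into an $\Omega(\sqrt{t})$ fraction of bundle-violations, amplifying the combinatorial gap to a constant $\delta$. Finally I would compose with a constant-size \emph{arithmetic} inner verifier---an analog of a long-code or Hadamard tester over the reals---to bring the arity of each gate back down to $O(1)$, yielding $(\epsilon,\delta)$-{\sc Gcircuit} with constants $\epsilon,\delta>0$. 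Together with Conjecture \ref{conj:eth}, this would give the claimed $2^{\tilde{\Omega}(n)}$ lower bound, provided every step blows the instance up by only a polylogarithmic factor.

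The hard part will be the arithmetic gap amplification itself. In the NP setting an unsatisfied clause is a discrete object, so Dinur's powering argument is purely combinatorial; here each gate is a real-valued constraint, and one must control how small additive errors propagate through arithmetic gadgets around the cycles inherent in any generalized circuit. The required guarantee is delicate: the bundle-gates should \emph{fire} under adversarial $\Omega(1)$ error on a small fraction of their inputs yet remain \emph{inactive} under negligible error on the rest---a property reminiscent of locally testable codes but in the analog domain. Compounding the difficulty, the reduction must preserve PPAD-totality: the cyclic dependency graph of the amplified circuit must remain consistent, and its canonical $(\epsilon,\delta)$-solution must decode back to an honest End-of-the-Line witness, ruling out the ``discard a faulty assignment'' gadgets that suffice for NP-PCPs. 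An algebraic alternative---encoding each line by evaluations of a low-degree polynomial and using a sum-check-style inner tester---may bypass some of these obstacles at the cost of larger overhead; either route ultimately demands an arithmetic, fixed-point-compatible analog of a locally testable code, and constructing such an object is precisely the technical content that Conjecture \ref{conj:pcp} abstracts away.
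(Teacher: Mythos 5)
The statement you are addressing is a \emph{conjecture}, not a theorem: the paper contains no proof of it, and explicitly frames ``reduce {\sc End of the Line} to $(\epsilon,\delta)$-{\sc Gcircuit}'' as the central open problem it is showcasing (see the Discussion section). So there is no paper proof to compare against; the honest answer is that nobody knows how to do this yet, and the paper's entire contribution is to argue that it is \emph{worth} doing.

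That said, your sketch is a reasonable research outline, and---to your credit---it converges on the same picture the authors themselves paint in their Discussion. They also observe that Dinur-style outer amplification seems plausibly adaptable to the generalized-circuit constraint graph, that the known inner verifiers are all discrete while {\sc Gcircuit} is inherently continuous, and that one therefore wants either a discrete CSP characterization of \PPAD{} or an arithmetic/analog inner verifier compatible with fixed-point constraints. Your paragraphs on expander-based degree reduction, graph powering over $t$-neighborhoods, and the need for a real-valued, locally-testable-code-like gadget track that framing closely.

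The gap, of course, is that none of these steps is actually carried out, and the obstacles you name are genuine and unresolved. In particular: (i) gap amplification for real-valued constraints with a fixed additive tolerance $\epsilon$ is not a known technique---Dinur's argument relies on an unsatisfied clause being a crisp discrete event, whereas ``violated by more than $\epsilon$'' degrades under the arithmetic gadgetry used to bundle neighborhoods, and small errors can silently accumulate around the cycles that make generalized circuits expressive; (ii) preserving \PPAD-totality is not a cosmetic concern but a structural one, since standard PCP composition freely throws away inconsistent proofs, which would break the guarantee that a solution always exists; and (iii) no analog of a locally testable code over the reals with the required soundness is currently known. Your final sentence---that constructing such an object is ``precisely the technical content that Conjecture~\ref{conj:pcp} abstracts away''---is exactly right, and is an admission that what you have written is a plan, not a proof. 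That is the correct state of affairs: the conjecture remains open, and your outline identifies the right bottlenecks without resolving them.
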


As we have mentioned, this latter statement, albeit with $\delta=0$, is known to follow from only Conjecture 1, for some $\epsilon>0$ \cite{Rub14-polymatrix-Nash}.

\subsection*{The Connection to PCP}  
The $(\epsilon,\delta)$-{\sc Gcircuit} problem is a constraint satisfaction problem: each line/mixed strategy is a real variable, and each gate/player defines a constraint. Since we are considering $\epsilon$-approximate satisfaction of the constraints, each variable need be represented using only
a constant (depending on $\epsilon$) number of bits. Therefore, a satisfying assignment (truncations of a Nash equilibrium) can be distinguished from an unsatisfying one (one violating the $(\epsilon-\delta)$-relaxation) by querying a constant (proportional to $1/\delta$) number of bits of the input, determined at random (but of course not uniformly so). This suggests that Conjecture \ref{conj:pcp} can be interpreted as a probabilistically checkable proof formulation of {\sc Gcircuit}: informally, it states that ``\PPAD\ has a PCP'' (that is, a complete problem whose witnesses can be verified by examining, at random, a finite number of bits).

The main result of this paper, explained next, reveals another similarity between the PCP formulation of NP and the $(\epsilon,\delta)$-{\sc Gcircuit} problem:  The intractability of the latter problem implies, among several other inapproximability theorems, the strongest possible inapproximability result for the 2-player Nash equilibrium problem, arguably the central open question in the area.



\subsection*{The Main Result}
We denote by $\varepsilon$-{\sc 2Nash} the problem of finding an $\epsilon$-Nash equilibrium in a bimatrix game. Our main result 
is the following:

\begin{thm}
\label{thm:main}
There is an $\varepsilon >0$ such that, assuming Conjecture \ref{conj:pcp}, solving $\varepsilon$-{\sc 2Nash} for two-player games with $n$ strategies requires $n^{\tilde\Omega(\log n)}$ time.  
\end{thm}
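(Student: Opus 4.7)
The plan is to give a ``birthday repetition'' reduction from $(\epsilon_0, \delta_0)$-{\sc Gcircuit} on $N$ variables and gates, which by Conjecture~\ref{conj:pcp} requires time $2^{\tilde\Omega(N)}$, to $\varepsilon$-{\sc 2Nash} on games with $n = 2^{\tilde\Theta(\sqrt N)}$ strategies. Since $N = \tilde\Theta(\log^2 n)$, the hardness $2^{\tilde\Omega(N)} = 2^{\tilde\Omega(\log^2 n)} = n^{\tilde\Omega(\log n)}$ then transfers from $(\epsilon_0,\delta_0)$-{\sc Gcircuit} to $\varepsilon$-{\sc 2Nash}, giving the theorem.

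For the construction, each player's pure strategies are pairs $(S, \sigma)$, where $S \subseteq [N]$ is a set of $k = \Theta(\sqrt N)$ gates and $\sigma$ is an $\epsilon_0$-discretized assignment to every line touching a gate in $S$; since each gate has $O(1)$ incident lines, the total number of pure strategies is $n = 2^{\tilde O(\sqrt N)}$, as needed. The payoff to the row player at a pure profile $\bigl((S_1,\sigma_1),(S_2,\sigma_2)\bigr)$ is a convex combination of (i) a \emph{consistency} reward, proportional to the fraction of shared lines in $S_1 \cap S_2$ on which $\sigma_1$ and $\sigma_2$ agree, and (ii) a \emph{satisfaction} reward, proportional to the fraction of gates in $S_1$ whose constraints are $\epsilon_0$-satisfied by $\sigma_1$; the column player's payoff is defined symmetrically.

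The key analytic step is to show that any $\varepsilon$-Nash equilibrium $(x,y)$ of this game yields a global assignment $x^* \in [0,1]^N$ that is an $(\epsilon_0, \delta_0)$-approximate solution of the Gcircuit. By the birthday paradox, two uniformly chosen $k$-subsets have expected intersection $k^2/N = \Theta(1)$, so the consistency term forces that, for all but a $\delta_0/2$ fraction of lines, the mixed marginals induced by $x$ and by $y$ on that line are both near-point-masses that agree with each other---otherwise a player could switch to the other's majority value at that line and improve by more than $\varepsilon$. Defining $x^*$ line-by-line as these almost-deterministic values, a standard averaging/deviation argument on the satisfaction term then shows that fewer than a $\delta_0$ fraction of gates can be $\epsilon_0$-violated by $x^*$, for otherwise a deviation to a random $k$-subset of violated gates with assignment $x^*|_S$ would improve the payoff by more than $\varepsilon$.

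The main obstacle will be simultaneously controlling (a) the weighting of the consistency and satisfaction terms, so that consistency dominates enough to force marginal concentration yet satisfaction is heavy enough that a $\delta_0$-fraction of violations translates into a payoff gain exceeding $\varepsilon$; (b) the loss incurred by rounding the mixed marginals of $(x,y)$ to the $\epsilon_0$-grid before comparing to constraint semantics; and (c) ``boundary'' lines that sit in too few of the sampled sets $S$ for their marginals to be meaningful, which must be absorbed into the $\delta_0$-fraction of permitted violations. These are the familiar pitfalls of birthday-repetition-style arguments, and their calibration determines the precise relationship (roughly $\varepsilon = \Theta(\epsilon_0 \delta_0)$ together with a sufficiently small constant in $k = c\sqrt N$) needed to complete the reduction.
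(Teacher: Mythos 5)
Your high-level strategy is correct: a birthday-repetition reduction from an $N$-variable $(\epsilon_0,\delta_0)$-{\sc Gcircuit} instance to a bimatrix game with $n = 2^{\tilde{O}(\sqrt{N})}$ strategies does yield the $n^{\tilde{\Omega}(\log n)}$ bound. But your specific construction has a fatal gap: \emph{nothing forces either player to spread probability mass over the full circuit}. A consistency reward is maximized when both players concentrate on a single identical $k$-subset $S^*$ with a single identical local assignment $\sigma^*$; a satisfaction reward is simultaneously maximized by choosing $\sigma^*$ to satisfy the $O(k)$ gates inside $S^*$ locally (always possible, since each gate's constraint is locally satisfiable). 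This pure profile is therefore an \emph{exact} Nash equilibrium of your game, yet it says nothing about the other $N - O(k)$ gates. Your obstacle (c) gestures at ``boundary lines,'' but concentration leaves not a small fraction but essentially all lines undetermined, so the extraction step has nothing to extract. Your appeal to the birthday paradox presupposes that both players sample $k$-subsets near-uniformly, which is precisely what has to be \emph{enforced}, not assumed.

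The paper avoids this in two ways. First, it does not reduce directly from {\sc Gcircuit}: it first passes (Theorem 2) to $(\epsilon',\delta')$-weak approximate Nash in a bipartite degree-3 polymatrix game, and has each bimatrix player simulate only \emph{one} side of the bipartition. Since the two players control disjoint sets of vertices, there is no cross-player consistency problem at all --- each vertex's marginal is owned by exactly one player. Second, and crucially, the bimatrix game is a sum of \emph{three} games: the main simulation game plus two copies of Alth\"{o}fer's zero-sum hide-and-seek game, one with each player as hider over her own blocks. Alth\"{o}fer's game has the strong property that every approximate equilibrium forces the hider to play $O(\epsilon)$-close to uniform over locations (Lemma~\ref{lem:uniform} / Lemma~\ref{lem:DP-PTAS}); this is exactly the coverage guarantee your argument silently assumes. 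Your design would need an analogous ``enforcer'' component --- without it, the game admits degenerate equilibria unrelated to the circuit, and the reduction is unsound. Your remaining points (discretization loss, calibration of $\varepsilon = \Theta(\epsilon_0\delta_0\lambda)$, absorbing a vanishing fraction of genuinely under-sampled blocks into $\delta_0$) are the right ones to worry about once coverage is secured, and track the paper's bookkeeping.
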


Our proof, given in Section 3, employs the technique of ``birthday repetition,'' pioneered by \cite{AIM14-birthday} and used in different contexts by \cite{BRKW15-DkS}, and in particular by \cite{BKW15-best_nash} to show intractability of a Nash equilibrium-related problem.  Starting from a polymatrix game with two strategies per player and in which the player interactions form a cubic bipartite graph with $n$ nodes on each side, the players on both sides are broken into blocks of size $\sqrt{n}$.  The game is simulated by a two-player game, in which each player simulates the nodes in one of the sides of the bipartite graph by choosing a block and a strategy for each node in it; that is, the total number of actions of each of the two players is about $2^{\sqrt{n}}$, and such is the complexity of the reduction (this is necessary if one wishes to rule out better than quasipolynomial algorithms). The interactions between blocks, plus certain particular side games played by the two players, ensure a faithful enough simulation of the original multimatrix game.

\paragraph{Remark:} It is not hard to see that a quasipolynomial lower bound follows from a weaker assumption than Conjecture 2, namely that the $(\epsilon,\delta)$-{\sc Gcircuit} problem requires $2^{\tilde\Omega(n^{\alpha})}$ time, {\em for some $\alpha > {1\over 2}$.}  

In addition to Theorem 1, we prove two other complexity consequences of Conjecture 2, solving certain open problems in the area:  An improved inapproximability result for the problem of relative (multiplicative) approximation of the Nash equilibrium first established by Dskalakis \cite{Das13_multiplicative_hardness}, and an inapproximability result for the problem of finding a competitive equilibrium with equal incomes and indivisible goods \cite{ACEEI_Bud11,OPR14_A-CEEI} when one seeks to minimize the Gini index of the income distribution.

\section{The $(\epsilon,\delta)$-{\sc Gcircuit} and Weak Approximate Nash }
\label{sec:weakNash}

Generalized circuits are similar to the standard algebraic circuits,
the main difference being that generalized circuits contain cycles,
which allow them to verify fixed points of continuous functions. We
restrict the class of generalized circuits to include only a particular
list of gates described below. Formally,
\begin{defn}
{[}{Generalized circuits, \cite{2-player_nash_CDT09}}{]} A {\em
generalized circuit} ${\cal S}$ is a pair $\left(V,{\cal {T}}\right)$,
where $V$ is a set of nodes and ${\cal {T}}$ is a collection of
gates. Every gate $T\in{\cal {T}}$ is a 5-tuple $T=G\left(\zeta\mid v_{1},v_{2}\mid v\right)$,
in which $G\in\{G_{\zeta},G_{\times\zeta},G_{=},G_{+},G_{-},G_{<},G_{\vee},G_{\wedge},G_{\neg}\}$
is the type of the gate; $\zeta\in\mathbb{R}\cup\left\{ nil\right\} $
is an (optional) real parameter; $v_{1},v_{2}\in V\cup\{nil\}$ are the first
and second input nodes of the gate (one or both of them may be missing); and $v\in V$ is the output node; no two distinct gates have the same output. 
\end{defn}

Alternatively, we can think of each gate as a constraint on the values
on the incoming and outgoing wires. We are interested in the following
constraint satisfaction problem: given a generalized circuit, find
an assignment to all the wires that simultaneously satisfies all the
gates. When every gate computes a continuous function of the incoming
wires, a solution must exist by Brouwer's fixed point theorem. 

In particular, we are interested in the approximate version of this
CSP, where we must approximately satisfy most of the constraints.
\begin{defn}
Given a generalized circuit ${\cal S}=\left(V,{\cal T}\right)$, we
say that an assignment $\mathrm{\mathbf{x}}\colon V\rightarrow\left[0,1\right]$
{\em $(\epsilon, \delta)$-approximately satisfies} ${\cal {S}}$ if for
all but a $\delta$-fraction of the gates, $\mathbf{x}$ satisfies the corresponding
constraints:\\

\begin{tabular}{|c|c|}
\hline 
Gate & Constraint\tabularnewline
\hline 
\hline 
$G_{\zeta}\left(\alpha\mid\mid a\right)$ & $\mathbf{x}\left[a\right]=\alpha\pm\epsilon$\tabularnewline
\hline 
$G_{\times\zeta}\left(\alpha\mid a\mid b\right)$ & $\mathbf{x}\left[b\right]=\alpha\cdot\mathbf{x}\left[a\right]\pm\epsilon$\tabularnewline
\hline 
$G_{=}\left(\mid a\mid b\right)$ & $\mathbf{x}\left[b\right]=\mathbf{x}\left[a\right]\pm\epsilon$\tabularnewline
\hline 
$G_{+}\left(\mid a,b\mid c\right)$ & $\mathbf{x}\left[c\right]=\min\left(\mathbf{x}\left[a\right]+\mathbf{x}\left[b\right],1\right)\pm\epsilon$\tabularnewline
\hline 
$G_{-}\left(\mid a,b\mid c\right)$ & $\mathbf{x}\left[c\right]=\max\left(\mathbf{x}\left[a\right]-\mathbf{x}\left[b\right],0\right)\pm\epsilon$\tabularnewline
\hline 
$G_{<}\left(\mid a,b\mid c\right)$ & $\mathbf{x}\left[c\right]=\begin{cases}
1\pm\epsilon & \mathbf{x}\left[a\right]<\mathbf{x}\left[b\right]-\epsilon\\
0\pm\epsilon & \mathbf{x}\left[a\right]>\mathbf{x}\left[b\right]+\epsilon
\end{cases}$\tabularnewline
\hline 
$G_{\vee}\left(\mid a,b\mid c\right)$ & $\mathbf{x}\left[c\right]=\begin{cases}
1\pm\epsilon & \mathbf{x}\left[a\right]=1\pm\epsilon\mbox{ or }\mathbf{x}\left[b\right]=1\pm\epsilon\\
0\pm\epsilon & \mathbf{x}\left[a\right]=0\pm\epsilon\mbox{ and }\mathbf{x}\left[b\right]=0\pm\epsilon
\end{cases}$\tabularnewline
\hline 
$G_{\wedge}\left(\mid a,b\mid c\right)$ & $\mathbf{x}\left[c\right]=\begin{cases}
1\pm\epsilon & \mathbf{x}\left[a\right]=1\pm\epsilon\mbox{ and }\mathbf{x}\left[b\right]=1\pm\epsilon\\
0\pm\epsilon & \mathbf{x}\left[a\right]=0\pm\epsilon\mbox{ or }\mathbf{x}\left[b\right]=0\pm\epsilon
\end{cases}$\tabularnewline
\hline 
$G_{\neg}\left(\mid a\mid b\right)$ & $\mathbf{x}\left[b\right]=\begin{cases}
1\pm\epsilon & \mathbf{x}\left[a\right]=0\pm\epsilon\\
0\pm\epsilon & \mathbf{x}\left[a\right]=1\pm\epsilon
\end{cases}$\tabularnewline
\hline 
\end{tabular}\\

Given a generalized circuit ${\cal S}=\left(V,{\cal T}\right)$, \textsc{$(\epsilon, \delta)$-Gcircuit}
is the problem of finding an assignment that $(\epsilon, \delta)$-approximately
satisfies it. 
\end{defn}


\begin{defn}
In an {\em $\left(\epsilon,\delta\right)$-weak approximate Nash equilibrium}
at most a $\delta$-fraction of the players can gain more than $\epsilon$
by deviating.
\end{defn}

\begin{thm}
Conjecture \ref{conj:pcp} is equivalent to the following statement:
There exist constants $\epsilon', \delta' >0$, such that finding an 
$\left(\epsilon',\delta'\right)$-weak approximate Nash equilibrium
in a polymatrix graphical game with degree 3 and 2 actions per player
requires $2^{\tilde \Omega (n)}$ time.
\end{thm}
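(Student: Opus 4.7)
The plan is to establish both directions by verifying that the standard reductions between \textsc{Gcircuit} and polymatrix Nash are \emph{sufficiently local}, so that the fraction of violated gates and the fraction of dissatisfied players are linearly related.

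For the direction ``Conjecture \ref{conj:pcp} $\Rightarrow$ weak Nash hardness,'' I would start from the reduction of \cite{Rub14-polymatrix-Nash} from $(\epsilon,0)$-\textsc{Gcircuit} to polymatrix games with degree~$3$ and $2$ actions per player. The key property to extract from that construction is that it is gate-by-gate: each gate $T$ of the generalized circuit is simulated by a constant-size ``gadget'' of players (call its size $K$), and each player lies in only $O(1)$ gadgets. One then needs to check that if a player $i$ in the gadget for a gate $T$ can gain less than $\epsilon'$ by deviating (for a suitable $\epsilon'=\epsilon'(\epsilon)$), then the value on the output wire of $T$ satisfies the gate constraint up to $\epsilon$. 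Contrapositively, a violated gate forces at least one of its gadget-players to have $\epsilon'$-incentive to deviate. Hence any $(\epsilon',\delta')$-weak approximate Nash equilibrium yields an $(\epsilon,K\delta')$-satisfying assignment to the circuit, so taking $\delta'=\delta/K$ completes this direction.

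For the converse ``weak Nash hardness $\Rightarrow$ Conjecture \ref{conj:pcp},'' I would take a polymatrix game $\Gamma$ of degree $3$ with $2$ actions per player and build a generalized circuit $\mathcal S(\Gamma)$ that encodes the best-response computation. For each player $i$, since $i$ has only $2$ actions and only $3$ neighbors, its expected utility for each action is a polynomial (of constant size) in the mixed strategies of the neighbors, and the indicator of ``$i$ has an $\epsilon$-best response that gains more than $\epsilon$'' can be computed by a constant number $K'$ of the allowed gates $G_{\times\zeta},G_+,G_-,G_<,G_\vee,G_\wedge,G_\neg$. The output wire of this little circuit is then fed into a $G_=$ gate that writes the new mixed strategy of $i$. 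If an $(\epsilon,\delta)$-approximate satisfying assignment of the resulting circuit is given, then at most $\delta|\mathcal T|\le K'\delta\cdot n$ gates are violated, hence at most $K'\delta$ fraction of players fail to be $\epsilon$-best-responding; choosing the weak Nash parameters as $\delta'=K'\delta$ and $\epsilon'$ a constant factor larger than $\epsilon$ then gives the reduction. Since the original game has $n$ players, the resulting circuit has $\tilde O(n)$ gates, so the reduction is quasilinear and preserves the $2^{\tilde\Omega(n)}$ lower bound.

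The main obstacle I expect is bookkeeping the constants: one must ensure that (i) the locality of the \cite{Rub14-polymatrix-Nash} construction really is $O(1)$ per gate (the construction there is stated to achieve constant-$\epsilon$ inapproximability with linear blowup, which is exactly the locality we need), and (ii) in the converse direction, the approximation errors do not amplify as signals propagate through the $K'$ gates simulating one player, which is where one must carefully choose $\epsilon$ and $\epsilon'$ and exploit the fact that each player has only a constant-size neighborhood. The rest is routine verification that the list of allowed gates suffices to implement a constant-size best-response test for a $2$-action player with $3$ neighbors.
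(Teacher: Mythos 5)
Your high-level plan is the same as the paper's: show both directions via the standard reductions and check that they relate the fraction of violated gates to the fraction of non-best-responding players linearly. The converse direction (weak Nash $\Rightarrow$ Gcircuit) is essentially right, modulo the slightly confusing phrasing about computing an ``indicator''---what you actually want is a constant-size sub-circuit per player that computes an approximate best response from the (constantly many) neighboring mixed-strategy variables and constrains the player's own variable to equal it. However, the forward direction has two genuine gaps that the paper handles explicitly and that your ``bookkeeping'' caveat does not resolve.

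First, fan-out. You describe the \cite{Rub14-polymatrix-Nash} reduction as ``gate-by-gate, each gate simulated by a constant-size gadget, each player in $O(1)$ gadgets,'' but a general generalized circuit has unbounded fan-out, so its output wires feed into many gates while the target game has degree $3$. Reducing fan-out cannot be done by the naive binary tree of $G_{=}$ gates at constant $\epsilon$: the error compounds to $\Theta(\epsilon\log n)$ along a depth-$\log n$ chain of $\pm\epsilon$-approximate equalities. The paper (following \cite{Rub14-polymatrix-Nash}) instead converts the value to an $\epsilon$-precision unary encoding with $G_{<}$ gates, fans it out through trees of $G_{\neg}$ gates (logical gates whose $\pm\epsilon$ noise does \emph{not} accumulate), and decodes back; this costs $\Theta(1/\epsilon)$ fresh gates per leaf, which degrades $\delta$ to $\Theta(\delta\cdot\epsilon)$. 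You flag ``locality'' as something to check but neither identify the failure of the naive approach nor the correct mechanism nor the resulting $\delta$ bookkeeping, so as written the forward reduction does not go through.

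Second, well-supported versus general approximate equilibrium. The gate gadgets (e.g.\ the $G_{\times\zeta}$ gadget in Lemma~\ref{lem:times-gadget}) are analyzed for \emph{well-supported} approximate equilibria: the argument that ``$\mathbf p[w]=1$'' requires that a strictly inferior action not appear in the support, which need not hold at a general $\epsilon'$-approximate Nash equilibrium (where a small probability on a bad action is still admissible). The paper therefore first obtains hardness for a \emph{weak well-supported} equilibrium and then invokes a separate lemma to pass from weak approximate to weak well-supported at a quadratic loss ($\epsilon'=\Theta(\hat\epsilon^2)$). Your plan skips this step entirely, so the implication from ``player $i$ has at most $\epsilon'$ to gain'' to ``the gate constraint is $\epsilon$-satisfied'' does not follow as stated.
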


\begin{proof}
The reduction from $(\epsilon,\delta)$-{\sc Gcircuit} to 
$\left(\epsilon',\delta'\right)$-weak approximate Nash equilibrium
follows analogously to the reduction in \cite{Rub14-polymatrix-Nash} 
from $\epsilon$-{\sc Gcircuit} to $\epsilon'$-approximate Nash equilibrium.
However there are subtleties in the first step of the reduction%
\footnote{We thank Mark Braverman for bringing these issues to our attention.} that goes from an arbitrary instance of {\sc Gcircuit} to one where the fan-out is at most $2$. We further break this step into three sub-steps.


\subsubsection*{Constant fan-out gates}

In the first sub-step, we get rid of gates with super-constant fan-out, i.e.~we reduce to $(\epsilon,\delta)$-{\sc Gcircuit} to $(\epsilon, \delta/4)$-{\sc Gcircuit} with fan-out $O(1/\delta)$.
Let ${\mathcal{S}}^0$ denote the original instance. 
Note that if we have $n$ gates, there are at most $2n$ incoming wires. Therefore, at most $\delta/4$-fraction of gates have $8/\delta$ outgoing wires or more. We create a new instance ${\mathcal{S}}^1$ by removing each such gate $g$ with $d_g \ge 8/\delta$ outgoing wires, and replacing it with $d_g$ constant gates $G_{\zeta}$, with an arbitrary constant $\alpha$, and connect each one to a different outgoing wire; we discard $g$'s original incoming wire(s). This sub-step introduces at most $2n$ new gates. 

\paragraph{Analysis of reduction to constant fan-out:}
We claim that any $(\epsilon, \delta/4)$-approximate solution for the ${\mathcal{S}}^1$ can be mapped to an $(\epsilon, \delta)$-approximate solution for ${\mathcal{S}}^0$: Given an $(\epsilon, \delta/4)$-approximate solution for ${\mathcal{S}}^1$, we first transform it so that all constraints on the new constant gates are satisfied: each of those new constant gates $g$ feeds into one old gate $g'$; for each constraint that was originally unsatisfied for $g$, it's possible that now the constraint on $g'$ is unsatisfied, but the total number of unsatisfied gates does not increase. We now obtain an assignment for the original circuit (setting the high-degree gates' assignment to $\alpha$). Other than the high-degree gates (which contribute only a $\delta/4$-fraction of gates), every gate satisfied in the approximate solution for the ${\mathcal{S}}^1$ is satisfied in ${\mathcal{S}}^0$; the {\em fraction} of unsatisfied gates increases by at most a $3$ factor because the total number of gates decreases back to $n$.

\subsubsection*{Multiplication gates with bounded factor}
Observe that for any gate $g$ that is not a $G_{\times \zeta}$-gate, if $\mathrm{\mathbf{x}}$ $\epsilon$-satisfies $g$ and $||\mathrm{\mathbf{x}}' - \mathrm{\mathbf{x}}||_{\infty} \le \epsilon$, then $\mathrm{\mathbf{x}}'$ $4\epsilon$-satisfies $g$. 

Technically, the multiplication gates $G_{\times \zeta}$ can be instantiated with an arbitrarily large factor $\alpha$, which forces them to compute a function with an arbitrarily large Lipschitz constant%
\footnote{This issue is completely brushed under the rug in earlier works of~\cite{2-player_nash_CDT09,Rub14-polymatrix-Nash}, but that is easily fixable because they give concrete \PPAD-completeness results for {\sc Gcircuit} that can be verified to use bounded factors. Here, in contrast, we start with the {\em conjecture} that {\sc Gcircuit} is hard for some parameters.}; this in turn will be inconvenient in our next sub-step when we go from constant fan-out to fan-out $2$. 
We call such multiplication gates with factor $\alpha > \alpha^* = \epsilon^{-O(1/\delta)}$ {\em unstable}.

Given an instance ${\mathcal{S}}^1$ of $(\epsilon, \delta/4)$-{\sc Gcircuit}, we first replace all the unstable gates with constant-$1$ gates. 
Intuitively, for any gate $g$ that feeds into an unstable gate in ${\mathcal{S}}^1$, we want to round up its output without breaking any other gates that receive their inputs from $g$.
Now we can use the fact that $g$ has bounded fan-out: by pigeonhold principle, there must exist an interval $(\epsilon \gamma/2, \gamma)$ for $\gamma \in (4,  \epsilon^{-O(1/\delta)})$ such that $g$ does not feed into any $G_{\times \zeta}$ gates with factor in that interval.
We also replace all other gates with factor $> \gamma$ with constant-$1$ gates.  
Denote the resulting instance by ${\mathcal{S}}^2$

\paragraph{Analysis of reduction to gates with bounded factor:}
Suppose we have an $(\epsilon/2, \delta/4)$-approximate solution ${\mathcal{S}}^2$, we can almost immediately convert it to an approximate solution for ${\mathcal{S}}^1$, with the caveat that the value on incoming wires to unstable gates cannot be very close to $0$ (because that would not be consistent with the output of the corresponding constant-$1$ gate in the in ${\mathcal{S}}^2$). 

For any gate $g$ that feeds into an unstable gate (in ${\mathcal{S}}^1$), let $\gamma$ be as in the construction. If $g$'s value in the approximate solution for the ${\mathcal{S}}^2$ is $<1/\gamma$, we round it up to $1/\gamma$ (otherwise we leave it unchanged). 
Notice that the solution is now consistent with all gates that were replaced by constant-$1$ gates. For the remaining multiplication gates that receive their inputs from $g$, their factor is at most $\epsilon \gamma/2$, so their intended output increases by at most an additive $\epsilon/2$. Hence all gates that are $\epsilon/2$-satisfied by the approximate solution in ${\mathcal{S}}^2$ are $\epsilon$-satisfied by the adjusted solution for ${\mathcal{S}}^1$.

\subsubsection*{Fan-out $2$}
We can now reduce from $(\epsilon/2, \delta/4)$-{\sc Gcircuit} with fan-out $O(1/\delta)$ and $\alpha^*$-bounded multiplication gates to $(\epsilon'' = \Theta(\epsilon / (\alpha \log(1/\delta))), \delta/4)$-{\sc Gcircuit} with fan-out $2$. Here we follow~\cite{NASH-is-PPAD-hard_DGP09}, and augment each degree $2 <  d = O(1/\delta)$  gate with larger fan-out with a depth-$O(\log(d))$ binary tree of $G_{=}$ gates with fan-out $2$. Let ${\mathcal{S}}^3$ denote this new fan-out $2$ instance.

\paragraph{Analysis of reduction to fan-out $2$:}
We claim that any $(\epsilon'', O(\delta^2))$-approximate solution to ${\mathcal{S}}^3$, immeiately gives an $(O(\epsilon), O(\delta))$-approximate solution for ${\mathcal{S}}^2$. For every new $G_{=}$ gate that is more than $O(\epsilon/\log(1/\delta))$-unsatisfied in ${\mathcal{S}}^3$, we can assign an arbitrary value to the wires corresponding to the entire binary tree $G_{=}$ gates in ${\mathcal{S}}^2$; this increases the fraction of unsatisfied gates by a factor of at most $O(1/\delta)$. 

For any binary tree where all the $G_{=}$ gates are $\epsilon''$-satisfied, the inputs to the gates at the leaves of the trees change by at most $O(\epsilon'' \log(1/\delta))$. Therefore, keeping the same output for all the gates at the leaves is still an $\epsilon$-satisfying assignment for the same gates in ${\mathcal{S}}^2$.

\subsubsection*{Weak approximate Nash equilibrium}
We now reduce $(\epsilon,\delta)$-{\sc Gcircuit} with fan-out $2$ 
to $\left(\epsilon',\delta'\right)$-weak approximate Nash equilibrium.
Daskalakis et al.~construct, for each gate in the generalized circuit,
a game gadget with a few players whose mixed strategies at approximate equilibrium simulate the computation carried by the gate.
In other words, every approximate equilibrium of the gate gadget 
induces an approximately satisfying assignment to the corresponding input and output lines in the generalized circuit.
See Lemma \ref{lem:times-gadget} for an example of such a gadget. 
The gadgets are concatenated by identifying the ``output player'' of one gadget with the ``input player(s)" of the next gadget(s) in the generalized circuit.
Since each gadget is composed of only a constant number of players, 
if all but a $\delta'$-fraction of the players play approximately at equilibrium,
all but a $\delta = \Theta(\delta')$-fraction of the gates are violated by the induced assignment to the generalized circuit.

The above construction suffices to show hardness for an $\left(\hat \epsilon,\delta'\right)$-weak well-supported Nash equilibrium,
i.e. one for where all but a $\delta'$-fraction of the players,
every action in the support is $\hat \epsilon$-approximately best response.
Finally, for constant degree graphical games, it is not hard to derive
an $\left(\hat \epsilon,\delta'\right)$-weak well-supported Nash equilibrium
from any $\left(\epsilon',\delta'\right)$-weak approximate Nash equilibrium for $\epsilon' = \Theta(\hat \epsilon^2)$ \cite[Lemma~4]{Rub14-polymatrix-Nash}.\\

In the other direction, from 
$\left(\epsilon',\delta'\right)$-weak approximate Nash equilibrium
to $(\epsilon,\delta)$-{\sc Gcircuit}, 
it suffices to construct a generalized circuit that computes an $\epsilon'$-approximate best response of each player (except a $\delta'$-fraction).
Notice that in a graphical game with a constant number of actions per player, 
given a profile of mixed strategies, one only needs a constant number of arithmetic gates to compute a best response of any player.
In particular, to compute an $\epsilon$-approximate best response
it suffices to use gates of finite precision $\epsilon' = \Theta(\epsilon)$.
Finally, since the number of gates per player is constant,
if at most a $\delta$-fraction of the gates are $\epsilon$-unsatisfied, 
they can appear in the best-response computation for at most a $\delta'=\Theta(\delta)$-fraction of the players.
\end{proof}

\begin{lem}
\label{lem:times-gadget}
($G_{\times\zeta}$ gadget, \cite{NASH-is-PPAD-hard_DGP09})

Let $v_{1}$,$v_{2}$, and $w$ be players in a graphical game, and
suppose that the payoffs of $v_{2}$ and $w$ are as follows.
\begin{description}
\item [{Payoff for $v_2$:}] %
\begin{tabular}{|c||c|c|}
\hline 
 & $w$ plays $0$ & $w$ plays $1$\tabularnewline
\hline 
\hline 
$v_{2}$ plays $0$ & $0$ & $1$\tabularnewline
\hline 
$v_{2}$ plays $1$ & $1$ & $0$\tabularnewline
\hline 
\end{tabular}
\item [{Payoffs for $w$:}]~

\begin{description}
\item [{game with $v_1$:}] %
\begin{tabular}{|c||c|c|}
\hline 
 & $v_{1}$ plays $0$ & $v_{1}$plays $1$\tabularnewline
\hline 
\hline 
$w$ plays $0$ & $0$ & $\zeta$\tabularnewline
\hline 
$w$ plays $1$ & $0$ & $0$\tabularnewline
\hline 
\end{tabular}
\item [{game with $v_2$:}] %
\begin{tabular}{|c||c|c|}
\hline 
 & $v_{2}$ plays $0$ & $v_{2}$plays $1$\tabularnewline
\hline 
\hline 
$w$ plays $0$ & $0$ & $0$\tabularnewline
\hline 
$w$ plays $1$ & $0$ & $1$\tabularnewline
\hline 
\end{tabular}
\end{description}
\end{description}
Then, in every $\epsilon$-NE $\mathbf{p}\left[v_{2}\right]=\min\left(\zeta\mathbf{p}\left[v_{1}\right],1\right)\pm \epsilon$,
where $\mathbf{p}\left[u\right]$ denotes the probability that $u$
assigns to strategy $1$.
\end{lem}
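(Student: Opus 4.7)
The plan is to prove the lemma by a direct case analysis on the equilibrium conditions for the two players $v_2$ and $w$, exploiting the matching-pennies subgame between them and the cancellation of "reward" terms in $w$'s payoff.

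First I would write down the two utility differences that drive the analysis. Player $v_2$ plays a matching-pennies subgame against $w$, so its expected payoff from playing $1$ is $1-\mathbf{p}[w]$ and from playing $0$ is $\mathbf{p}[w]$. Player $w$'s expected payoff from playing $0$ is $\zeta\,\mathbf{p}[v_1]$ (only the entry against $v_1$ contributes) and from playing $1$ is $\mathbf{p}[v_2]$ (only the entry against $v_2$ contributes). Once these four numbers are in hand, the $\epsilon$-NE condition reduces to: if a pure strategy is used with positive probability, the other pure strategy cannot beat it by more than $\epsilon$.

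Next I would split into two cases on the target value. In the \emph{interior case} $\zeta\,\mathbf{p}[v_1]\le 1$, I want $|\mathbf{p}[v_2]-\zeta\,\mathbf{p}[v_1]|\le\epsilon$. Suppose for contradiction $\mathbf{p}[v_2]>\zeta\,\mathbf{p}[v_1]+\epsilon$; then playing $1$ strictly dominates playing $0$ for $w$ by more than $\epsilon$, forcing $\mathbf{p}[w]=1$ in any $\epsilon$-NE, which in turn makes $0$ strictly better than $1$ by $1$ for $v_2$, forcing $\mathbf{p}[v_2]=0$ and contradicting $\mathbf{p}[v_2]>\epsilon$. Symmetrically, if $\mathbf{p}[v_2]<\zeta\,\mathbf{p}[v_1]-\epsilon$, then $w$ strictly prefers $0$, so $\mathbf{p}[w]=0$, forcing $\mathbf{p}[v_2]=1$; but then $1<\zeta\,\mathbf{p}[v_1]-\epsilon\le 1-\epsilon$, a contradiction. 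In the \emph{saturated case} $\zeta\,\mathbf{p}[v_1]>1$, I want $\mathbf{p}[v_2]\ge 1-\epsilon$. Suppose $\mathbf{p}[v_2]<1-\epsilon$; then $w$'s payoff for $0$ exceeds its payoff for $1$ by $\zeta\,\mathbf{p}[v_1]-\mathbf{p}[v_2]>\epsilon$, so $\mathbf{p}[w]=0$, hence $v_2$ strictly prefers $1$ and $\mathbf{p}[v_2]=1$, contradicting the assumption. Combining both cases yields $\mathbf{p}[v_2]=\min(\zeta\,\mathbf{p}[v_1],1)\pm\epsilon$.

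There is no real obstacle: the only point requiring care is the standard subtlety that in an $\epsilon$-NE a pure strategy played with positive probability must be within $\epsilon$ of the best response, which is exactly what drives each contradiction above. One should also note that the argument implicitly assumes $\zeta\ge 0$ (otherwise the gadget would not compute $\min(\zeta\,\mathbf{p}[v_1],1)$ in the relevant range), which is the regime in which the gate $G_{\times\zeta}$ is used.
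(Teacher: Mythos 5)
Your proof is correct and takes essentially the same approach as the paper's short sketch: both argue by contradiction, using the cascade that a gap of more than $\epsilon$ in $w$'s two payoffs ($\zeta\mathbf{p}[v_1]$ vs.\ $\mathbf{p}[v_2]$) forces $w$ to a pure action, which in the matching-pennies subgame forces $\mathbf{p}[v_2]$ to an extreme value, yielding a contradiction. Your case split on $\zeta\mathbf{p}[v_1]\le 1$ versus $\zeta\mathbf{p}[v_1]>1$ is just a more explicit unpacking of the paper's single use of $\min(\zeta\mathbf{p}[v_1],1)$, and your remark that the argument uses the well-supported flavor of $\epsilon$-NE and assumes $\zeta\ge 0$ are both apt observations that the paper leaves implicit.
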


\begin{proof}
(Sketch) If $\mathbf{p}\left[v_{2}\right]>\zeta\mathbf{p}\left[v_{1}\right]+\epsilon$,
then in every $\epsilon$-NE $\mathbf{p}\left[w\right]=1$, which
contradicts $\mathbf{p}\left[v_{2}\right]>\epsilon$. Similarly, if
$\mathbf{p}\left[v_{2}\right]<\min\left(\zeta\mathbf{p}\left[v_{1}\right],1\right)-\epsilon$,
then $\mathbf{p}\left[w\right]=0$, which yields a contradiction when
$\mathbf{p}\left[v_{2}\right]<1-\epsilon$.
\end{proof}

\section{Proof of Theorem \ref{thm:main}}


\subsection{Overview}
The proof is a reduction from weak approximation of Nash equilibrium in a polymatrix game (recall Theorem 2) to the two-player problem. 
The two players simultaneously play three games: the main game, which
is the heart of the reduction; and two games based on a construction
due to Althofer \cite{Alt94}, which impose structural properties
of any approximate Nash equilibrium (interestingly, the oblivious lower bound of \cite{Daskalakis-Papadimitriou-PTAS} uses the same game).  The final payoff of a player is the sum of payoffs in all three games. For convenience of notation, the payoffs in each game will be in $[0,1]$; in order to normalize the payoffs in the final game to $[0,1]$ one should multiply by $1/3$ the payoffs in all three games.

\subsubsection*{Main game}

We let each of the two players ``control'' the vertices on one side
of the bipartite graphical game (we henceforth use {\em players}
only for the players in the bimatrix game, and refer to the players
in the graphical polymatrix game as {\em vertices}). For ease of
notation, we assume wlog that each player controls an equal number
of vertices, $n$.

We partition the vertices of each player into $n/k$ disjoint subsets of
size at most $2k=2\sqrt{n}$, such that every two subsets share at most
$18$ edges.
By Lemma \ref{lem:derandomized-partition}, we can efficiently find such a partition.
Let $\left(S_{1},\dots,S_{n/k}\right)$
and $\left(T_{1},\dots,T_{n/k}\right)$ denote the partitions of the
respective players. Each action of the players corresponds to a choice
of a subset (out of $n/k$ subsets in the partition), and a choice
of an action for each vertex in the subset (out of  at most $2^{2k}$ vectors
of actions). Together, the main game has $\left(\frac{n}{k}\cdot 2^{2k}\right)\times\left(\frac{n}{k}\cdot 2^{2k}\right)$ action profiles. When players choose actions $\left(S_{i},\vec{\alpha}_{S_{i}}\right)$
and $\left(T_{j},\vec{\beta}_{T_{j}}\right)$, the payoff of the row player
is the sum of the payoffs over all shared edges of $S_{i}$ and $T_{j}$, when they play the respective strategies from
$\vec{\alpha}_{S_{i}}$ and $\vec{\beta}_{T_{j}}$ (here we use the polymatrix structure of the payoffs in the graphical game; the payoffs are defined over edges and therefore payoffs of a certain vertex can be defined even though not necessarily all its neighbours are playing). Finally, we normalize by $\lambda/18$, where $\lambda=\Theta(\delta^2)$ is a small constant that satisfies $\lambda>\varepsilon'$. Similarly the payoff of the second player is derived from the payoffs of the vertices in $T_{j}$. 

\ignore{
\ifFULL
\else
\begin{lem}
\label{lem:random-partition}Let $G=\left(U,V,E\right)$ be a bipartite
$d$-regular graph with $n=\left|U\right|=\left|V\right|$. Let $S_{1},\dots,S_{n/k}$
and $T_{1},\dots,T_{n/k}$ be partitions of $U$ and $V$ to disjoint
subsets of size $k=\sqrt{n}\log n$ chosen uniformly at random. Then
w.h.p.
\[
\forall i,j\in\left[n/k\right]\,\,\,\left|\left(S_{i}\times T_{j}\right)\cap E\right|<2dk^{2}/n\mbox{.}
\]
\end{lem}
\begin{proof} Using the Generalized Chernoff bound \cite{PS97-chernoff,IK10-chernoff}. See full version for details.
\end{proof}
\fi
}

\ifFULL
\else
\begin{lem}
\label{lem:derandomized-partition}
Let $G=\left(U,V,E\right)$ be a bipartite
$d$-regular graph with $n=\left|U\right|=\left|V\right|$. 
Then we can efficiently find partitions $S_{1},\dots,S_{n/k}$
and $T_{1},\dots,T_{n/k}$ of $U$ and $V$, respectively, to disjoint
subsets, such that each subset has size at most $2k=\sqrt{n}\log n$, and:
\[
\forall i,j\in\left[n/k\right]\,\,\,\left|\left(S_{i}\times T_{j}\right)\cap E\right|<2d^2k^{2}/n\mbox{.}
\]
\end{lem}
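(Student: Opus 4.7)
The plan is to prove the lemma via the probabilistic method followed by derandomization. First I would take uniformly random partitions of $U$ and $V$, each into $n/k$ blocks of size exactly $k$, by drawing uniform random permutations on each side and cutting into consecutive length-$k$ intervals. For any fixed pair $(i,j)$, linearity of expectation gives $\E|(S_i \times T_j) \cap E| = dn \cdot (k/n)^2 = dk^2/n$, since each edge $(u,v)$ requires the independent events $u \in S_i$ and $v \in T_j$, each of probability $k/n$ (independent because the two permutations act on disjoint vertex sets).

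Next I would establish concentration. The indicators for distinct edges being in $S_i \times T_j$ are not mutually independent, but the partition distribution on each side (sampling $k$ vertices without replacement) is negatively associated, so the generalized Chernoff bound of Panconesi and Srinivasan applies to $X_{i,j} = |(S_i \times T_j) \cap E|$. Writing $\mu = dk^2/n$ and targeting $2d^2 k^2/n = 2d\mu$, the multiplicative Chernoff inequality gives $\Pr[X_{i,j} \geq 2d\mu] \leq \exp(-\Omega(d \log d \cdot \mu))$. Since $2k = \sqrt{n}\log n$ makes $\mu = \Theta(d \log^2 n)$, the per-pair failure probability is $n^{-\omega(1)}$, and a union bound over the $(n/k)^2 = O(n/\log^2 n)$ pairs shows that a uniformly random partition satisfies all constraints simultaneously with probability $1-o(1)$.

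Finally I would derandomize via the method of conditional expectations with pessimistic estimators. The moment-generating-function bound underlying each tail estimate is a polynomial-time computable estimator for the corresponding failure event, so I would process the vertices of $U \cup V$ one at a time and greedily assign each to the block that keeps the weighted sum of estimators from increasing. Using the relaxed block-size bound of $2k$ (rather than exactly $k$) absorbs any balance or divisibility issues introduced by the greedy assignment, while the initial sum of estimators being $o(1)$ ensures that the resulting deterministic partition satisfies the required edge-count bound for every pair $(i,j)$.

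The main obstacle is handling the dependencies among the edge indicator variables. The cleanest route is to invoke negative association for uniform size-$k$ subsets, which must be verified but is by now a standard fact; a more elementary alternative is to replace the uniform partitions with assignments drawn from a $\Theta(\log n)$-wise independent hash family (rebalanced afterwards to fit within $2k$), reducing the analysis to ordinary Chernoff at the cost of a slightly heavier construction. Either route suffices given the comfortable factor-of-$2d$ slack in the target bound.
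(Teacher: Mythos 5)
Your proposal takes a genuinely different and much heavier route than the paper's, and it also has real gaps. The paper's proof needs neither a concentration inequality nor a derandomization: it fixes \emph{any} partition of $U$ into blocks of size exactly $k$, then places the vertices of $V$ into the $T_j$'s one at a time, arguing purely by Markov averaging that a valid block always exists. At the moment $v$ is placed, fewer than half of the $T_j$'s already hold $\geq 2k$ vertices (since the average size is $<k$), and for each of the at most $d$ sets $S_i$ containing a neighbor of $v$, fewer than a $1/(2d)$-fraction of the $T_j$'s already carry $\geq 2d^2k^2/n$ edges to $S_i$ (since the average over $j$ is $<dk^2/n$); summing the lost fractions gives strictly less than $1$, so a block remains. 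That is the whole argument, and it is agnostic to the exact choice of $k$.

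Your probabilistic route has two substantive problems. First, the negative-association step is not correct as stated: the edge indicators $Z_e = X_u Y_v$ are products over two independent NA families, but distinct edges sharing a $U$-endpoint share the variable $X_u$, so the $Z_e$'s are not monotone functions of \emph{disjoint} coordinate sets, and the Panconesi--Srinivasan hypothesis $\Pr[\wedge_{e\in R} Z_e=1]\leq\big((k/n)^2\big)^{|R|}$ already fails for $|R|=2$ when the two edges share an endpoint. The standard repair is to first decompose $E$ into $d$ perfect matchings (K\"onig/Hall) and apply the bound matching by matching; you would need to add that step. Second, the derandomization is asserted rather than carried out: under a without-replacement (partition) distribution the MGF does not factor over vertices, so the pessimistic estimator is not the product form one writes down for independent trials, and the $\Theta(\log n)$-wise-independence-plus-rebalance alternative still needs an argument that rebalancing preserves the per-pair edge counts. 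Finally, a parameter warning: Section~3 of the paper actually invokes this lemma with $k=\sqrt n$ (the $2k=\sqrt n\log n$ in the statement is an inconsistency in the source), for which $\mu = dk^2/n = d$ is a constant; then the per-pair Chernoff failure probability is $\exp(-\Theta(d^2\log d))$, a constant, and the union bound over $(n/k)^2=n$ pairs fails outright. The greedy/Markov argument the paper uses is immune to this choice of $k$; your approach is not, which is likely why the paper abandoned the probabilistic version of this lemma.
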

\begin{proof}
Take an arbitrary partition of $U$ into subsets $S_{1},\dots,S_{n/k}$  of size exactly $k$, 
and inductively partition $V$ into subsets $T_{1},\dots,T_{n/k}$.
Use Markov's inequality to show every vertex $v\in V$ can be placed into some subset $T_j$.
\end{proof}
\fi

\subsubsection*{Althofer's games}

Althofer's game \cite{Alt94} is an asymmetric hide-and-seek win-lose game over $l$ locations. The hider chooses a location $i\in [l]$, the seeker chooses a subset $B\subset [l]$ of of size $|B|=l/2$. The hider wins if $i\notin B$, the seeker wins if $i\in B$. Namely, the payoff function is given by 
\begin{align*}
u_1(i,B)=1-u_2(i,B)=\begin{cases} 
1 & \text{if } i\notin B \\
0 & \text{if } i\in B.
\end{cases}
\end{align*}
In Althofer's game each player can guarantee $\frac{1}{2}$ by uniform play, therefore the value of the game is $\frac{1}{2}$. Althofer's game enjoys the following strong property: in every $\varepsilon$-approximate Nash equilibrium, the hider must play a mixed strategy that is $O(\varepsilon)$-close to the uniform distribution in total variation distance; see \cite{Daskalakis-Papadimitriou-PTAS}.

In our game each player plays two (unrelated) Althofer's games, one game as a hider, and one as a seeker. When player 1 is a hider, the locations are the sets $S_{1},\dots,S_{n/k}$ (i.e., $l=n/k$). The action from the main game $\left(S_{i},\vec{\alpha}_{S_{i}}\right)$ determines the location, where for the purposes of Althofer's game we ignore $\vec{\alpha}_{S_{i}}$. When player 1 is a seeker, the locations are the sets $T_{1},\dots,T_{n/k}$, and player 1 chooses an action to play in this game, independent of $\left(S_{i},\vec{\alpha}_{S_{i}}\right)$. The total number of actions for player 1 is $\left(\frac{n}{k}\cdot 2^{2k}\right)\cdot{n/k \choose n/\left(2k\right)}=2^{\tilde{O}(\sqrt{n})}$. Similarly, for player 2, who is playing as a hider over $T_{1},\dots,T_{n/k}$, and as a seeker over $S_{1},\dots,S_{n/k}$.

\subsection{Structure of an equilibrium}
For a mixed action $x$ of player $1$ we denote by $x(S_i)$ the total probability that the player chooses to control the vertices $S_i$; i.e., 
\begin{align*}
x(S_i)=\underset{\vec{\alpha}\in 2^{[k]}, B\subset \{T_1,...,T_{n/k}\}, |B|=\frac{n}{2k}}{\sum} x(((S_i,\vec{\alpha}),B)).
\end{align*}

\begin{lem}\label{lem:uniform}

If $(x,y)$ is an $\epsilon'$-Nash equilibrium,
then $\sum_{i\in [n/k]}\left|x\left(S_{i}\right)-\frac{k}{n}\right|=O\left(\lambda\right)$.
\end{lem}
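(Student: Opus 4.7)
The plan is to view the restriction of $(x,y)$ to the Althofer game in which player~$1$ is the hider over $\{S_1,\dots,S_{n/k}\}$ as an approximate Nash equilibrium of that Althofer game \emph{alone}, and then invoke the closeness-to-uniform property of Althofer's construction quoted in the overview. Write $\mu_i := x(S_i)$, which is player~$1$'s marginal over hider locations, and $p_i := \Pr_y[S_i \in B']$, where $B'$ denotes player~$2$'s seeker set ranging over subsets of $\{S_1,\dots,S_{n/k}\}$. The goal reduces to showing that $(\mu, y|_{\text{seeker}})$ is an $O(\lambda)$-Nash equilibrium of the standalone Althofer game, after which the cited property forces $\mu$ to be $O(\lambda)$-close to uniform, i.e., $\sum_i |\mu_i - k/n| = O(\lambda)$.

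The seeker side is easy: player~$2$'s action decomposes into a main-game coordinate and an independent seeker-set coordinate, so she can deviate in $B'$ alone without affecting her main-game or other-Althofer payoffs, and the $\varepsilon'$-NE condition immediately yields that $B'$ is an $\varepsilon'$-best response to $\mu$. The main obstacle is the hider side, because player~$1$'s hider location and main-game action are coupled through the single coordinate $(S_i,\vec\alpha_{S_i})$. To bound her hider regret, I would consider the pure deviation $(S_{i^*},\vec\alpha^*,B'_{\text{current}})$ with $i^* \in \arg\min_i p_i$: her hider payoff rises from $1 - \sum_i \mu_i p_i$ to $1 - \min_i p_i$, her other Althofer payoff is unchanged, and her main-game payoff changes by at most $\lambda$ in either direction because of the deliberate $\lambda/18$ normalization that places all main-game payoffs in $[0,\lambda]$. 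The $\varepsilon'$-NE condition, together with $\lambda > \varepsilon'$, then gives
\[
\sum_i \mu_i p_i - \min_i p_i \;\le\; \lambda + \varepsilon' \;=\; O(\lambda),
\]
so $\mu$ is an $O(\lambda)$-best response against $y|_{\text{seeker}}$ in Althofer's game alone.

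Combining the two one-sided bounds yields the required $O(\lambda)$-approximate equilibrium of the standalone Althofer game, and applying its closeness-to-uniform property concludes the proof. The hard part really is the hider step: without the explicit $\lambda/18$ scaling of the main game, the coupling between hider location and main-game action could let player~$1$ justify an arbitrarily skewed hider marginal by pointing to main-game gains, so the argument essentially uses both the smallness of $\lambda$ (to cap the cost of unilateral main-game changes) and the inequality $\lambda > \varepsilon'$ (to absorb the NE slack).
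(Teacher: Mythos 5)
Your argument is correct and follows essentially the same route as the paper's proof: you bound the hider-side regret in the standalone Althofer game by observing that a deviation in the hider/main coordinate alone changes the main-game payoff by at most $\lambda$ (the $\lambda/18$ scaling caps main-game payoffs at $\lambda$), and combine with $\lambda>\varepsilon'$, then invoke the closeness-to-uniform property, which is exactly what Lemma~\ref{lem:DP-PTAS} encodes. The only cosmetic difference is that the paper benchmarks against the uniform hider strategy (which guarantees $1/2$) and argues by contradiction directly against condition~(3) of Lemma~\ref{lem:DP-PTAS}, whereas you benchmark against the pure best response $\arg\min_i p_i$ and package the bound as an $O(\lambda)$-equilibrium of the standalone Althofer game before citing the closeness-to-uniform property; the quantitative conclusion is the same.
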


In the proof we use the following lemma from the full version of \cite{Daskalakis-Papadimitriou-PTAS}:
\begin{lem}[\cite{Daskalakis-Papadimitriou-PTAS}]\label{lem:DP-PTAS}
Let $\{a_i\}_{i\in [l]}$ be real numbers satisfying the following properties for some $\theta>0$:
(1) $a_1\geq a_2 \geq ... \geq a_l$;
(2) $\sum_{i\in [l]} a_i = 0$; and
(3) $\sum_{i\in [l/2]}a_i \leq \theta$.
Then $\sum_{i\in [l]} |a_i|\leq 4\theta$.
\end{lem}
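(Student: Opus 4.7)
The plan is to apply Lemma~\ref{lem:DP-PTAS} to the real sequence $a_i = x(S_i) - k/n$, with $l := n/k$. After relabeling so that $x(S_1) \geq \cdots \geq x(S_l)$, monotonicity is built in and $\sum_i a_i = 0$ (since $\sum_i x(S_i) = 1 = \sum_i k/n$), so the only substantive step is to verify the partial-sum hypothesis $\sum_{i \leq l/2} a_i = O(\lambda)$, after which the lemma immediately yields $\sum_i |a_i| = O(\lambda)$ as claimed.

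I will prove the partial-sum bound by exhibiting one pure-strategy deviation per player, each exploiting its own Althofer component. Write $p_i = x(S_i)$, and let $q_i$ be the probability, under the marginal of $y$ on its seeker subset, that $S_i \in B$; because $\sum_i q_i = l/2$, we have $q_{i^\ast} := \min_i q_i \leq 1/2$. For player~$1$, consider a pure deviation $(S_{i^\ast},\vec\alpha,B)$ with $\vec\alpha$ arbitrary and with $B$ chosen to maximize her seeker payoff. Her total payoff after deviating is at least $(1 - q_{i^\ast}) + (\text{best pure seeker payoff}) + 0$, while her current total payoff is at most $(1 - \sum_i p_i q_i) + (\text{best pure seeker payoff}) + \lambda$, since her mixed-strategy seeker payoff is dominated by the pure-strategy best response and her main-game payoff lies in $[0,\lambda]$ (by the $\lambda/18$ normalization together with the $18$-edge bound of Lemma~\ref{lem:derandomized-partition}). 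The $\epsilon'$-NE condition then yields $\sum_i p_i q_i \leq 1/2 + \lambda + \epsilon'$. A symmetric argument for player~$2$---whose pure deviation fixes her seeker subset to $B^\ast = \{S_1,\dots,S_{l/2}\}$ and is otherwise arbitrary---gives $\sum_{i \leq l/2} p_i \leq \sum_i p_i q_i + \lambda + \epsilon'$. Chaining the two inequalities yields $\sum_{i \leq l/2} p_i \leq 1/2 + 2\lambda + 2\epsilon' = 1/2 + O(\lambda)$ (using $\lambda > \epsilon'$ by construction), i.e.\ $\sum_{i \leq l/2} a_i = O(\lambda)$, and Lemma~\ref{lem:DP-PTAS} concludes.

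The main obstacle is untangling the three simultaneous games: any deviation changes all three components at once, not only the Althofer term one is trying to exploit. Two design features of the construction make the argument go through. First, each payoff component depends on a specific pair of marginals of the players' mixed strategies, so for any \emph{pure} deviation the three components simply add, which lets the matching ``best pure seeker payoff'' terms cancel between the two sides of the key inequality. Second, the main game's normalization by $\lambda/18$ caps its total contribution at $\lambda$ and thereby prevents it from overwhelming the $\Theta(1)$-sized swing of the Althofer hider payoff from $1 - \sum_i p_i q_i$ up to $1 - q_{i^\ast} \geq 1/2$ that the argument relies on. Once these two observations are in hand, the rest is the standard zero-sum Althofer best-response calculation, packaged into Lemma~\ref{lem:DP-PTAS}.
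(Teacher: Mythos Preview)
Your proposal does not prove Lemma~\ref{lem:DP-PTAS} at all. That lemma is a purely elementary statement about a finite sequence of real numbers---sorted, summing to zero, with a bounded top-half partial sum---and its conclusion is the $\ell_1$ bound $\sum_i |a_i| \leq 4\theta$. There is no game, no Nash equilibrium, no Althofer construction in the hypotheses. What you have written is a proof of Lemma~\ref{lem:uniform}: you set $a_i = x(S_i) - k/n$, verify conditions (1)--(3) of Lemma~\ref{lem:DP-PTAS} via the approximate-equilibrium argument, and then \emph{invoke} Lemma~\ref{lem:DP-PTAS} as a black box in your last sentence. That is exactly how the paper proves Lemma~\ref{lem:uniform}, but it is circular as a proof of Lemma~\ref{lem:DP-PTAS} itself.

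To prove Lemma~\ref{lem:DP-PTAS} you need an elementary argument. Note that the paper does not supply one either; it simply cites \cite{Daskalakis-Papadimitriou-PTAS}. A direct route: by (2), the positive part $P=\sum_{a_i>0}a_i$ equals the negative part $N=-\sum_{a_i<0}a_i$, so $\sum_i|a_i|=2P$. By (1) and (3), $a_{\lfloor l/2\rfloor+1}\le \tfrac{2}{l}\sum_{i\le l/2}a_i\le \tfrac{2\theta}{l}$, and by (1) and (2), $a_{\lfloor l/2\rfloor+1}\ge \tfrac{2}{l}\sum_{i>l/2}a_i = -\tfrac{2}{l}\sum_{i\le l/2}a_i \ge -\tfrac{2\theta}{l}$; combining these with the monotonicity pins $P$ down to at most $2\theta$. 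Any argument of this flavor is what is called for---your Althofer deviation analysis, while correct for Lemma~\ref{lem:uniform}, is addressing the wrong statement.
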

\begin{proof}[Proof of Lemma \ref{lem:uniform}]
In order apply Lemma \ref{lem:DP-PTAS}, we denote $a_i=x(S_i)-\frac{k}{n}$ and we assume wlog that $x(S_1)\geq x(S_2)\geq ... \geq x(S_{n/k})$. Then the first two conditions hold. Regarding the third condition, we argue that $\sum_{i\in [l/2]} x(S_i)\leq 3\lambda$ this will complete the proof.

Player 1 can guarantee a payoff of $1/2$ in the sum of Althofer's games as a hider and the main game, by playing uniformly and choosing an arbitrary actions for the controlled vertices (for instance, $\vec{\alpha}_{S_i}=\vec{0}$). Assume by contradiction that $\sum_{i\in [l/2]} x(S_i)> 3\lambda$. Since Player 2 is $\epsilon'$-best replying, his payoff in the Althofer game (as a seeker) is at least $\frac{1}{2}+3\lambda-\epsilon'$ (because he can get $1/2+3\lambda$ by choosing the set $[l/2]$). Therefore, Player 1's payoff in the Althofer game (as a hider) is at most $\frac{1}{2}-3\lambda+\epsilon'$. If we add to it his payoffs in the main game, then his payoff is at most $\frac{1}{2}-2\lambda+\epsilon'$. Therefore, player 1 can increase his payoff by $2\lambda-\epsilon'>\epsilon'$ by deviating to the uniform distribution over the locations (as a hider), and maintaining his mixed action as a seeker.
\end{proof}

\subsection{Completing the proof}

Any mixed strategy $x$ of player 1 in the bimatrix game induces a mixed strategy of all the vertices in $\cup_i S_i$ in the obvious way. Vertex $s\in S_i$ plays the action $1$ with probability $p$ where $p$ is the conditional probability
\begin{align*}
Pr_x(\vec{\alpha}_{S_i}(s)=1|\text{Player 1 controls } S_i).
\end{align*}
If the event ``Player 1 controls $S_i$" occurs with probability 0, then we define (arbitrarily) that $p=1$. Similarly, any mixed strategy $y$ of player 2 induces a mixed strategy of all the vertices in $\cup_i T_i$.

We claim that if $\left(x,y\right)$ is an $\epsilon'$-approximate
Nash equilibrium of the bimatrix game, then the induced mixed-strategies
profile is an $\left(\epsilon,\delta\right)$-approximate Nash equilibrium
of the original graphical game. 

By Lemma \ref{lem:uniform} and Markov's inequality, for a $\left(1-O\left(\sqrt{\lambda}\right)\right)$-fraction
of the subsets, the row player distributes within a $\left(1\pm O\left(\sqrt{\lambda}\right)\right)$-factor
of the correct weight ($k/n$); i.e. $x\left(S_{i}\right)\in\frac{k}{n}\cdot\left[1-O\left(\sqrt{\lambda}\right),1+O\left(\sqrt{\lambda}\right)\right]$.
Let us restrict our attention only to those subsets, and call them
{\em good}. We say that a vertex is {\em good} if it and all of
its neighbors belong to good subsets. Since the game graph is of bounded
degree, we again have that a $\left(1-O\left(\sqrt{\lambda}\right)\right)$-fraction of the vertices is good. 

Consider any good vertex whose induced mixed strategy is not $\epsilon$-optimal
in the original game given that the rest of the vertices also play
according to their induced strategies. Then, changing its strategy
in the bimatrix game (while leaving all other marginals the same),
would increase the payoff of its player by at least
\[
\left(1-O\left(\sqrt{\lambda}\right)\right)^2\cdot\left(\frac{k}{n}\right)^{2}\cdot \left(1-O\left(\sqrt{\lambda}\right)\right)\cdot \epsilon \cdot \frac{\lambda}{18}=\Omega\left(\epsilon\lambda/n\right)\mbox{,}
\]
where the $\left(1-O\left(\sqrt{\lambda}\right)\right)^2\cdot\left(\frac{k}{n}\right)^{2}$ term corresponds to the probabilities that the subsets corresponding to both the vertex and any of its neighbors are played; the $\left(1-O\left(\sqrt{\lambda}\right)\right)\cdot \epsilon$ term corresponds to the improvement of the vertex in the bimatrix game, where instead of summing the payoff in all three edges (as in the graphical game) these three edges have weights $\{1+\gamma_j\}_{j=1,2,3}$ for $|\gamma_j|=O(\sqrt{\lambda})$; finally $\lambda/18$ is the normalization. The right side follows by plugging in $k=\sqrt{n}$.

If (by a contradiction) a $\delta$-fraction of the vertices have $\epsilon$-improvement strategies; then one of the players has $\left(\delta/2-O\left(\sqrt{\lambda}\right)\right)\cdot n$ good vertices with an $\epsilon$-improvement\footnote{Here we use our choice of $\lambda=\Theta(\delta^2)$, which guarantees that $\delta/2-O\left(\sqrt{\lambda}\right)=\Omega(\delta)$}. This player can benefit $\Omega\left(\epsilon\cdot\lambda/n\right)$ from a deviation of each vertex. So his total improvement from all deviations simultaneously is  $\Omega\left(\epsilon\cdot\delta\cdot\lambda\right)$ - which is impossible when $(x,y)$ is an $\epsilon'$-approximate Nash equilibrium for $\epsilon'=\Theta(\epsilon\cdot\delta\cdot\lambda)$.

\section{Implications for Relative $\epsilon$-approximate Nash}

Daskalakis \cite{Das13_multiplicative_hardness} defines a notion of relative (sometimes also called {\em multiplicative} \cite{HRS08-multiplicative_Nash}, as opposed to the more standard additive) $\epsilon$-Nash equilibrium, and proves that in two-player games with payoffs in $\left[-1,1\right]$, finding a relative $\epsilon$-well supported Nash equilibrium is \PPAD-complete. In particular, he concludes that the quasi-polynomial algorithm of Lipton et al. \cite{LMM03_quasi_poly} cannot achieve this notion of approximate equilibrium.

This result has two caveats: (1) Through the use of positive and negative payoffs, the gain from deviation is large compared to the expected payoff, only because the latter is small due to cancellation of positive and negative payoffs. Namely, the gain from deviation may be very small compared to the expected magnitude of the payoff. (2) It only applies to the more restrictive notion (thus rendering the hardness result weaker) of {\em well-supported} approximate equilibrium; i.e. an equilibrium where every action in the support has to be approximately optimal. Showing \PPAD-hardness for both positive payoffs and general (non well-supported) approximate equilibrium were left as open questions in 
\cite{Das13_multiplicative_hardness}. Recently, the first question was settled in \cite{Rub14-polymatrix-Nash} where it was shown that finding a relative $\epsilon$-well-supported Nash equilibrium with positive payoffs is indeed \PPAD-complete. 

Here, assuming Conjecture 2, we settle the second question: we show that finding any relative $\epsilon$-approximate Nash equilibrium is \PPAD-complete. Furthermore, in our hard instance the row player has only positive payoffs and the column player has only negative payoffs, and so there is no cancellation of payoffs as in the construction of \cite{Das13_multiplicative_hardness}.

\begin{thm}
\label{thm:relative} Assuming Conjecture \ref{conj:pcp} there exists
a constant $\epsilon'=\Theta\left(\epsilon\cdot\delta^{3}\right)>0$
such that finding a relative $\epsilon'$-approximate Nash equilibrium
(ANE) in a bimatrix game where the row player's payoffs are non-negative
and the column player's payoffs are non-positive is \PPAD-complete.\end{thm}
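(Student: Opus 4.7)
The plan is to exhibit a polynomial-time reduction from the $(\epsilon,\delta)$-{\sc Gcircuit} problem---PPAD-hard under Conjecture~\ref{conj:pcp}---to the problem of finding a relative $\epsilon'$-ANE with the prescribed sign pattern. I will adapt the gate-by-gate gadget construction of Daskalakis~\cite{Das13_multiplicative_hardness}, which established the analogous result for well-supported equilibria, modifying it in two essential ways.

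First, I will re-engineer the bimatrix payoffs so that every row entry is non-negative and every column entry is non-positive, without any cancellation between positive and negative terms. This is achieved by adding a positive constant to every row payoff and subtracting a sufficiently large constant from every column payoff; neither shift changes best-response behavior. The constants are calibrated so that in any equilibrium both players' expected payoffs have magnitude $\Theta(1)$, and consequently a relative $\epsilon'$-ANE of the resulting game is automatically an additive $\Theta(\epsilon')$-ANE.

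Second, I will lift the hardness from the WSNE case of~\cite{Das13_multiplicative_hardness} to the plain-NE case by exploiting the $\delta$-slack of $(\epsilon,\delta)$-{\sc Gcircuit}. The difference between NE and WSNE manifests only at actions in the support that carry small probability weight: there the NE condition is much weaker than WSNE, and the corresponding gate gadgets may violate their constraints. A pigeonhole/Markov-style argument will show that the total mass of gates whose gadgets fall into this bad regime is $O(\delta)$, which is precisely the fraction we are permitted to leave violated in the $(\epsilon,\delta)$-{\sc Gcircuit}. Compounding this with the Markov step that converts average-over-gadgets bounds into per-gate ones, and with a forcing argument (analogous to Lemma~\ref{lem:uniform}) that must be applied on both sides of the game, yields the final dependence $\epsilon' = \Theta(\epsilon\cdot\delta^3)$.

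The main obstacle will be the interplay between the no-cancellation modification and the relaxation from WSNE to NE: either change in isolation is relatively standard, but combined they nearly cancel each other out. The absence of cancellation means that the relative $\epsilon'$-ANE yields only $\Theta(\epsilon')$ additive slack rather than an amplified value, while the weaker NE condition threatens far more gadgets than WSNE would. It is precisely the $(\epsilon,\delta)$-formulation of Conjecture~\ref{conj:pcp} that rescues the reduction: it converts what would otherwise be a global obstruction into a bounded, tolerable fraction of violated gates, exactly absorbing the loss introduced by dropping the well-supportedness assumption.
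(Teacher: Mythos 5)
Your central calibration step breaks the reduction. You propose to add a positive constant to every row payoff and subtract a constant from every column payoff so that ``in any equilibrium both players' expected payoffs have magnitude $\Theta(1)$,'' and then observe that a relative $\epsilon'$-ANE is automatically an additive $\Theta(\epsilon')$-ANE. But that observation is fatal to your plan: with constant $\epsilon'$, additive $\Theta(\epsilon')$-ANE is solvable in quasi-polynomial time by Lipton--Markakis--Mehta~\cite{LMM03_quasi_poly}, so a polynomial-time reduction from $(\epsilon,\delta)$-{\sc Gcircuit} to your game would place $(\epsilon,\delta)$-{\sc Gcircuit} in quasi-polynomial time, directly contradicting Conjecture~\ref{conj:pcp}. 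The whole point of the relative notion is that it remains hard precisely because, in the hard instances, the equilibrium payoff is \emph{tiny}, so that relative $\epsilon'$-slack translates into an additive slack far smaller than any constant. The paper's construction achieves non-negativity/non-positivity of the payoffs \emph{without} raising the expected payoff to $\Theta(1)$: the entries are shifted into $[1,1+\eta]$ and $[-(1+\eta),-1]$ only on the $O(n)$ action pairs corresponding to adjacent nodes, while the overwhelming majority of the $\Theta(n^2)$ pure-strategy pairs give \emph{zero} payoff to both players. This keeps the equilibrium payoff at $\Theta(1/n)$, so that a relative $\epsilon'$-ANE is an additive $\Theta(\epsilon'/n)$-ANE, which is small enough to force nearly all nodes to play correctly. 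Your blanket additive shifts make every entry bounded away from zero, destroying this amplification.

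Separately, your plan to ``adapt the gate-by-gate gadget construction of Daskalakis~\cite{Das13_multiplicative_hardness}'' is not what the paper does. The paper does not build gate gadgets inside the bimatrix game for Theorem~\ref{thm:relative}; it first converts the $(\epsilon,\delta)$-{\sc Gcircuit} instance into a bipartite degree-$3$ polymatrix game (via Theorem~2 / the weak-ANE characterization) and then has each bimatrix player directly ``control'' one side of the polymatrix graph, choosing a node and an action for that node, together with two hide-and-seek side games that pin each player's marginal over nodes to be near-uniform. Your intuition about needing a forcing/uniformity lemma (``analogous to Lemma~\ref{lem:uniform}'') and a Markov step to tolerate an $O(\delta)$-fraction of bad nodes is sound and does appear in the paper's argument, and the $\delta$-slack of $(\epsilon,\delta)$-{\sc Gcircuit} is indeed what lets the paper drop well-supportedness. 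But those pieces only work because the underlying game is engineered so that the equilibrium payoff is $\Theta(1/n)$ and each node carries probability mass $\Theta(1/n)$, giving the quantitative chain $\text{relative }\epsilon' \Rightarrow \text{additive }\Theta(\epsilon'/n) \Rightarrow$ at most $\delta n$ bad nodes. None of this is available once you flatten the payoff landscape to $\Theta(1)$.
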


Proving a similar theorem when both players have positive payoffs remains an interesting open question.
In fact, we do not know of any instances with positive payoffs where all $\epsilon$-approximate Nash equilibria must have large (e.g., linear, or even super-logarithmic) support.

\begin{proof}
We reduce from the problem of finding an (additive) $\left(\epsilon,\delta\right)$-Nash
in a bipartite, degree $3$, polymatrix game with two actions per
player. We construct a main game where the bimatrix game players (henceforth
just {\em players}) control the nodes of the polymatrix game, and
two side games that guarantee that each player randomizes (approximately)
uniformly over all her nodes.
\begin{description}
\item [Main game] We let the row player ``control'' the nodes on
one side of the bipartite game graph, and let the column player control
the nodes on the other side. Namely, let $n$ be the number of nodes
on each side of the graph; each player has $2n$ actions, each corresponding
to a choice of node and an action for that node. When the players
play strategies that correspond to adjacent nodes in the graphs, they
receive the payoffs of the corresponding nodes, scaled (by a small
positive constant $\eta=O\left(\delta^{2}\right)$) and shifted to
fit in the intervals: $\left[1,1+\eta\right]$ for the row player,
and $\left[-\left(1+\eta\right),-1\right]$ for the column player.
If the nodes played do not share an edge in the bipartite game graph,
the utility for both players is zero. Notice that if either player
chooses her node uniformly at random (and chooses the action for that
node arbitrarily), then the expected payoff is in $\frac{3}{n}\cdot\left[1,1+\eta\right]$
for the row player, and in $\frac{3}{n}\cdot\left[-\left(1+\eta\right),-1\right]$
for the column player. 
\item [Side games] We also let the players play two hide-and-seek
win-lose zero-sum games over a space of $n$ actions. In both games,
the row player is chasing the column player. In each game, if they
pick the same strategy, the row player receives payoff $1$ (and the
column player receives $-1$); otherwise the payoffs are $0$. Finally,
in the first side game we identify the row player's strategies with
her choice of nodes in the main game. Namely, if she plays node $i$
in the main game and the column player chose $i$ in the first side
game, then her payoff from this side game is $1$. Similarly, we identify
the column player's strategies in the second game with her choice
of nodes in the main game. 
\end{description}
We proceed by showing that in every relative $\epsilon'$-ANE, the
row player's utility is approximately $5/n$, and the column player's
utility is approximately $-5/n$ (Lemma \ref{lem:app-zero_sum});
then we show that in every relative $\epsilon'$-ANE, both players
randomize approximately uniformly over their nodes (Lemma \ref{lem:structure});
finally we use these two observations to complete the proof (Subsection
\ref{sub:Completing-the-proof}).
\end{proof}

\subsection{Value and structure of a relative $\epsilon'$-ANE}

Given mixed strategies $\left(x,y\right)$, we let $x\left(i\right)$
denote the total probability that the row player assigns to node $i$,
and analogously for $y\left(j\right)$. We also let $x^{*}\left(y\right)$
and $y^{*}\left(x\right)$ denote the corresponding best responses
of each player. Finally, let $U_{R}\left(x;y\right)$ and $U_{C}\left(y;x\right)$
denote the expected payoffs for the row and column players, respectively.
\begin{lem}
\label{lem:app-zero_sum}If $\left(x,y\right)$ is a relative $\epsilon'$-ANE,
then 
\begin{equation}
\left(1-\epsilon'\right)\cdot\frac{5}{n}\leq U_{R}\left(x;y\right)\leq\left(1+\eta\right)\left(1+\epsilon'\right)\cdot\frac{5+3\eta}{n}\label{eq:UR-bounds}
\end{equation}
and 
\begin{equation}
-\left(1-\epsilon'\right)\left(1-\mathbf{\eta}\right)\cdot\frac{5}{n}\geq U_{C}\left(y;x\right)\geq-\left(1+\epsilon'\right)\cdot\frac{5+3\eta}{n}\label{eq:UC-bounds}
\end{equation}
\end{lem}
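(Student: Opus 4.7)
The plan is to derive both inequalities from three simple ingredients.

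First, I will exhibit guaranteed payoffs via uniform play. If the row player mixes uniformly over her $2n^{2}$ actions, then she puts mass $1/n$ on each node in the main game, uniform mass on each side-$1$ strategy (since the side-$1$ strategy is identified with her node choice), and uniform mass on each side-$2$ strategy. Degree $3$ gives an edge-hit probability of $3/n$ against any $y$, so $U_R^{\text{main}}(\text{uniform};y) \geq 3/n$; in each side game, uniform matching happens with probability exactly $1/n$ independently of the opponent, contributing $\geq 1/n$ each. Hence $U_R(x^{*}(y);y) \geq 5/n$. Symmetrically, the column player's uniform strategy incurs loss at most $3(1+\eta)/n$ in the main game and exactly $1/n$ in each side game, so $|U_C(y^{*}(x);x)| \leq (5+3\eta)/n$.

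Second, I will apply the relative $\epsilon'$-ANE condition. For the row player this gives $U_R(x;y) \geq (1-\epsilon')\,U_R(x^{*}(y);y) \geq (1-\epsilon')\cdot 5/n$, which is the lower bound in \eqref{eq:UR-bounds}. For the column player, whose payoffs are strictly negative, the right reading of ``relative $\epsilon'$'' is that her loss cannot exceed a $(1+\epsilon')$ factor of her minimum achievable loss; thus $|U_C(y;x)| \leq (1+\epsilon')\,|U_C(y^{*}(x);x)| \leq (1+\epsilon')(5+3\eta)/n$, which is the lower bound in \eqref{eq:UC-bounds}.

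Third, I will relate the two players' payoffs. On every main-game edge $(i,j)$ the row's payoff $r_{ij}$ and the column's absolute loss $c_{ij}$ both lie in $[1,1+\eta]$, so $r_{ij}/c_{ij} \in [1/(1+\eta),1+\eta]$; averaging over the joint distribution on edges induced by $(x,y)$ preserves this, and the side games are exactly zero-sum, so
\[
(1-\eta)\,|U_C(y;x)| \;\leq\; U_R(x;y) \;\leq\; (1+\eta)\,|U_C(y;x)|,
\]
using $1/(1+\eta)\geq 1-\eta$. Substituting now yields the two remaining bounds: $U_R(x;y) \leq (1+\eta)|U_C(y;x)| \leq (1+\eta)(1+\epsilon')(5+3\eta)/n$, which is the upper bound in \eqref{eq:UR-bounds}; and $|U_C(y;x)| \geq (1-\eta)\,U_R(x;y) \geq (1-\epsilon')(1-\eta)\cdot 5/n$, which is the upper bound in \eqref{eq:UC-bounds}.

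The main obstacle is not mathematical depth but conceptual care: one must set up the uniform deviations so that the induced marginals on main-game nodes, side-$1$ strategies, and side-$2$ strategies are each simultaneously uniform (this is why the actions are factored as (node, inner-action, side-$2$ strategy)), and one must fix the right reading of ``relative $\epsilon'$-ANE'' for the column player, whose equilibrium payoff is negative.
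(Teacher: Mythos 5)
Your proof is correct and follows essentially the same strategy as the paper: establish the guaranteed payoffs via uniform deviations, invoke the relative $\epsilon'$-ANE condition, and then use the fact that the combined game is $\eta$-approximately zero-sum to transfer bounds between $U_R$ and $U_C$. The only cosmetic difference is that you phrase the approximate-zero-sum property multiplicatively (as a ratio bound $(1-\eta)|U_C| \le U_R \le (1+\eta)|U_C|$, proved entry-wise and preserved under convex combination) rather than the paper's additive formulation $|U_R + U_C| \le \eta\cdot\min\{U_R, -U_C\}$.
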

\begin{proof}
Observe that the main game is relative $\eta$-approximately zero-sum;
i.e. for any pure strategy profile $\left(x',y'\right)$ the expected
utilities $U_{R}^{\mbox{main}}\left(x';y'\right),U_{C}^{\mbox{main}}\left(y';x'\right)$
of the row and column player, respectively, satisfy:
\begin{equation}
\left|U_{R}^{\mbox{main}}\left(x';y'\right)+U_{C}^{\mbox{main}}\left(y';x'\right)\right| 
	\leq\eta\cdot\min\left\{ U_{R}^{\mbox{main}}\left(x';y'\right),-U_{C}^{\mbox{main}}\left(y';x'\right)\right\} \label{eq:eps-zero_sum}
\end{equation}
By linearity of expectation and convexity of the absolute value function, this continues to hold when $\left(x',y'\right)$
are mixed strategies. Furthermore, since the side games are exactly
zero-sum and have the same signs as the main game, the same follows
for the payoffs from the entire game:

\begin{equation}
\left|U_{R}\left(x';y'\right)+U_{C}\left(y';x'\right)\right|\leq\eta\cdot\min\left\{ U_{R}\left(x';y'\right), -U_{C}\left(y';x'\right)\right\} \label{eq:eps-zero_sum-1}
\end{equation}
In particular, the above inequality holds for $\left(x,y\right)$.
Thus, the upper bounds in (\ref{eq:UR-bounds}) and (\ref{eq:UC-bounds})
follow from the lower bounds in (\ref{eq:UC-bounds}) and (\ref{eq:UR-bounds}),
respectively.

Finally, $\left(1-\epsilon'\right)\cdot\frac{5}{n}\leq U_{R}\left(x;y\right)$
in every relative $\epsilon'$-ANE, because the row player can guarantee
an expected payoff of at least $\frac{5}{n}$ by randomizing uniformly
over all her strategies. Similarly, $U_{C}\left(y;x\right)\geq-\left(1+\epsilon'\right)\left(\frac{3\left(1+\eta\right)+2}{n}\right)$
because the column player can guarantee a payoff of at least $-\frac{3\left(1+\eta\right)+2}{n}$
by randomizing uniformly over all her strategies.\end{proof}
\begin{lem}
\label{lem:structure}There exists a constant $\lambda=\Theta\left(\delta^{2}\right)$
such that for every $\left(x,y\right)$ relative $\epsilon'$-ANE,
$\sum\left|x\left(i\right)-\frac{1}{n}\right|\leq\lambda$ and $\sum\left|y\left(j\right)-\frac{1}{n}\right|\leq\lambda$.
\begin{proof}
Assume by contradiction that $\sum\left|x\left(i\right)-\frac{1}{n}\right|>\lambda$,
then in particular there must exist an $i\in\left[n\right]$ such
that $x\left(i\right)<\left(1-\lambda/2\right)/n$. When the column
player chooses her strategy uniformly at random in the main game and in the second
side game and plays strategy $i$ in the first side game, her expected
payoff is at least 
\[
U_{C}\left(y^{*}\left(x\right);x\right)\geq-\frac{3}{n}\left(1+\eta\right)-\frac{1}{n}-x\left(i\right).
\]
Therefore, in any relative $\epsilon'$-ANE her expected utility is
at least 
\[
U_{C}\left(y;x\right)\geq\left(1+\epsilon'\right)\cdot\left(\frac{-5-3\eta+\lambda/2}{n}\right),
\]
Which contradicts the upper bound in (\ref{eq:UC-bounds}) when we
take $\lambda$ sufficiently large, e.g. $\lambda=10\eta$.

Similarly, if $\sum\left|y\left(j\right)-\frac{1}{n}\right|>\lambda$,
there must exist an $j\in\left[n\right]$ such that $y\left(j\right)>\left(1+\lambda/2\right)/n$.
Therefore, the row player can guarantee a payoff of at least 
\[
U_{R}\left(x^{*}\left(y\right);y\right)\geq\frac{4}{n}+y\left(j\right).
\]
Thus by relative $\epsilon'$-ANE, 
\[
U_{R}\left(x;y\right)\geq\left(1-\epsilon'\right)\cdot\left(\frac{5+\lambda/2}{n}\right),
\]
which contradicts the upper bound in (\ref{eq:UR-bounds}).
\end{proof}
\end{lem}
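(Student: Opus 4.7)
The plan is to prove both statements by contradiction: assume some marginal is far from uniform, identify a pure coordinate that is ``too small'' (or ``too large''), and let the other player exploit it in the appropriate side game, exactly in the spirit of the Althofer hide-and-seek argument. The contradiction will come from Lemma~\ref{lem:app-zero_sum}, which pins both players' equilibrium values within an $O(\eta/n+\epsilon'/n)$ window of $\pm 5/n$; so any deviation that is $\Omega(\lambda/n)$ better than that window closes the argument.

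Handling the $x$ case first: if $\sum_i |x(i) - 1/n| > \lambda$, then since $\sum_i x(i) = 1$ the negative-deviation mass is at least $\lambda/2$, and pigeonhole gives some $i^*\in[n]$ with $x(i^*) < (1-\lambda/2)/n$. I would then consider the column-player deviation that plays uniformly in the main game and in the second side game, but plays $i^*$ deterministically in the first side game. The three contributions bound as follows: the main game contributes at least $-3(1+\eta)/n$ (each row node has at most three neighbors, each scaled into $[-(1+\eta),-1]$); the first side game contributes $-x(i^*) > -(1-\lambda/2)/n$, since she ``gets caught'' only when the row player's node-choice lands on $i^*$; and the second side game contributes $-1/n$ against the row player's separate strategy. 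Summing yields $U_C(y^*(x);x) \geq -(5+3\eta-\lambda/2)/n$. The relative $\epsilon'$-ANE condition $U_C(y;x)\geq(1+\epsilon')U_C(y^*(x);x)$ propagates this to $U_C(y;x)\geq -(1+\epsilon')(5+3\eta-\lambda/2)/n$, which is incompatible with the upper bound $U_C(y;x)\leq -(1-\epsilon')(1-\eta)\cdot 5/n$ from Lemma~\ref{lem:app-zero_sum} provided $\lambda$ dominates $\eta$ and $\epsilon'$ by a large enough absolute constant.

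The $y$ case is symmetric: from $\sum_j |y(j) - 1/n| > \lambda$ pick $j^*$ with $y(j^*) > (1+\lambda/2)/n$, and let the row player deviate to playing $j^*$ deterministically in the second side game while randomizing uniformly elsewhere; the three contributions sum to at least $(5+\lambda/2)/n$, contradicting the upper bound $U_R(x;y) \leq (1+\eta)(1+\epsilon')(5+3\eta)/n$ from Lemma~\ref{lem:app-zero_sum}. The only real obstacle is parameter bookkeeping: the slack $\lambda/2$ that the deviation extracts has to beat the combined effect of the main-game asymmetry (a factor of $1\pm\eta$ on each payoff) and the relative-approximation slack $\epsilon'$. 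Since $\eta = O(\delta^2)$ is already fixed by the main-game construction, and we are free to take $\epsilon' = \Theta(\epsilon\delta^3)$, a routine expansion of $(1+\epsilon')(5+3\eta-\lambda/2) < (1-\epsilon')(1-\eta)\cdot 5$ shows that setting $\lambda = C\eta = \Theta(\delta^2)$ for a sufficiently large absolute constant $C$ suffices.
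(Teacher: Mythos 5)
Your proposal is correct and matches the paper's proof step for step: the same pigeonhole extraction of an under-weighted node $i^*$ (resp. over-weighted $j^*$), the same column-player (resp. row-player) deviation that plays uniformly except in the relevant side game, the same additive bound $-(5+3\eta-\lambda/2)/n$ (resp. $(5+\lambda/2)/n$) for the best-response value, and the same contradiction against the bounds of Lemma~\ref{lem:app-zero_sum} once $\lambda$ is taken a large enough constant multiple of $\eta$. No meaningful differences.
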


\subsection{\label{sub:Completing-the-proof}Completing the proof of Theorem
\ref{thm:relative}}

Now, given a relative $\epsilon'$-ANE $\left(x,y\right)$, we can
take, for each node $i$, the mixed strategy induced by the probabilities
$x\left(i:1\right)/x\left(i\right)$ and $x\left(i:2\right)/x\left(i\right)$
that the row player assigns to each action (respectively, $y\left(j:1\right)/y\left(j\right)$
and $y\left(j:2\right)/y\left(j\right)$ assigned by the column player).
We claim that this strategy profile is an $\left(\epsilon,\delta\right)$-approximate
equilibrium for the polymatrix game. Assume by contradiction that
this is not the case.

By Lemma \ref{lem:structure} and Markov's inequality, a $\left(1-O\left(\sqrt{\lambda}\right)\right)$-fraction
of the nodes are played within $\frac{\sqrt{\lambda}}{n}$ of the
correct probability $1/n$. For a node $i$ controlled by the row
player, we say that it is {\em good} if $\left|x\left(i\right)-1/n\right|\leq\frac{\sqrt{\lambda}}{n}$
and $\left|y\left(j\right)-1/n\right|\leq\frac{\sqrt{\lambda}}{n}\;\;\forall j\in{\cal N}\left(i\right)$,
and analogously for column player's nodes. Since the graph has bounded
degree, a $\left(1-O\left(\sqrt{\lambda}\right)\right)$-fraction
of the nodes are good. 

Let $i$ be any good node who has an $\epsilon$-improving deviation
from her induced strategy in the polymatrix game. If the player who
controls $i$ makes the corresponding deviation in the two-player
game, she increases her expected payoff by at least $\epsilon\cdot\eta\cdot\frac{1}{n^{2}}\cdot\left(1-O\left(\sqrt{\lambda}\right)\right)$.
(We multiply by $\eta$ to account for the scaling; by $1/n^{2}$
for the probability that this player plays node $i$ and the other
player plays a neighbor of $i$; and by $\left(1-O\left(\sqrt{\lambda}\right)\right)$
to correct for the deviation from $1/n$ in the latter probabilities.)
By our assumption that the induced strategy profile is not an $\left(\epsilon,\delta\right)$-approximate
equilibrium for the polymatrix game, at least one of the players has
at least $\left(\delta-O\left(\sqrt{\lambda}\right)\right)n$ good
nodes with $\epsilon$-improving deviations. Summing her gains from
those deviations, we get that this player has can improve her expected
payoff by at least $\left[\epsilon\cdot\eta\cdot\left(1-O\left(\sqrt{\lambda}\right)\right)\right]\left(\delta-O\left(\sqrt{\lambda}\right)\right)/n$.

However, this is a contradiction since it follows from Lemma \ref{lem:app-zero_sum},
that $\left(x,y\right)$ is also an (additive) $\epsilon'\cdot\left(\frac{6}{n}\right)$-ANE.
\qed

\section{Implications for Fairness Mechanisms}
Competitive equilibrium from equal incomes (CEEI) is a well-known
fair allocation mechanism \cite{Foley67:Resource,Varian74:Equity,Thomson85:Theories}: We give all agents a unit of money, and price the goods in such a way that the market clears.  It is also well-known that when goods are indivisible or utilities
are non-linear, an equilibrium may not exist.  However, Budish \cite{ACEEI_Bud11}
proves that an approximate CEEI still exists. This
concept of approximate equilibrium is used in practical system for allocating seats in courses to business school students \cite{wharton_paper}.

Budish \cite{ACEEI_Bud11} measures the proximity to a perfect CEEI
via two parameters: a solution is an $\left(\alpha,\beta\right)$-CEEI
if the clearing error of the competitive equilibrium is less than $\alpha$, and all the incomes
are between $1$ and $1+\beta$. Budish shows that an $\left(\alpha,\beta\right)$
always exists for some favorable $\alpha=\alpha^{*}$, and any $\beta>0$.
Recently, \cite{OPR14_A-CEEI} showed a reduction from $\epsilon$-{\sc Gcircuit} with fan-out $2$,
to the problem of finding an $\left(\alpha^{*},\Theta\left(\epsilon /\log\left(1/\epsilon\right)\right)\right)$-CEEI.
In particular, when combined with the results of \cite{Rub14-polymatrix-Nash},
this implies that it is \PPAD-complete to find an $\left(\alpha^{*},\beta\right)$-CEEI
for some constant $\beta>0$. 

\ifFULL
The $\beta$ parameter used in Budish's formulation is an imperfect way of measuring income inequalities, as it may be set by a single outlier.  Perhaps the best known, and most widely used, measure of income inequality is the {\em Gini index\/} (e.g. \cite{gini1912variabilita,cowell2011measuring}) (see the Appendix for the definition).  In fact, the Gini index is used to assess the performance of the class seat assignment system used in practice \cite{wharton_paper}.  
\else
The $\beta$ parameter used in Budish's formulation is an imperfect way of measuring income inequalities, as it may be set by a single outlier.  Perhaps the best known, and most widely used, measure of income inequality is the {\em Gini index\/} \cite{gini1912variabilita,cowell2011measuring} (see e.g. \cite{cowell2011measuring} or our full version for definition).  In fact, the Gini index is used to assess the performance of the class seat assignment system used in practice \cite{wharton_paper}.  
See the full version for the precise definitions and statement of the result.
\fi

\begin{thm}[Informal] \label{thm:ceei-informal}
Assuming Conjecture \ref{conj:pcp},
finding an income assignment and prices with low market clearing error and near-optimal Gini index is PPAD-complete.
\end{thm}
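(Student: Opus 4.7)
The plan is to extend the reduction of \cite{OPR14_A-CEEI} from $\epsilon$-\textsc{Gcircuit} to $(\epsilon,\delta)$-\textsc{Gcircuit}, and then to turn the resulting fraction of outlier incomes into a bound on the Gini index. The OPR14 reduction is gate-local: each gate becomes a constant-sized gadget whose agents' equilibrium incomes record the satisfaction of the gate's constraint, and an $\epsilon$-satisfied gate yields incomes in $[1,1+\beta]$ for $\beta = \Theta(\epsilon/\log(1/\epsilon))$. With a small modification, or by direct inspection of the gadgets, we can ensure that every income lies in a universally bounded interval $[0,B]$, even for violated gates.

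Feeding an $(\epsilon,\delta)$-\textsc{Gcircuit} instance through this reduction produces a CEEI instance in which any $\alpha^*$-clearing solution has at least a $(1-\Theta(\delta))$-fraction of incomes in $[1,1+\beta]$ and the rest in $[0,B]$. A direct calculation of the mean absolute difference shows that the Gini index of any such distribution is at most $O(\beta + \delta B)$, which can be made arbitrarily close to $0$ by choosing $\epsilon,\delta$ small; this gives completeness. Conversely, given any CEEI with clearing error $\alpha^*$ and Gini index at most $G$, Markov's inequality applied to $|b_i - \mu|$, where $\mu$ is the mean income, yields that all but an $O(\sqrt{G})$-fraction of agents have income within $O(\sqrt{G})$ of $\mu$; by gate-locality of the gadgets, each such agent's income determines its gate's constraint up to additive error $O(\sqrt{G})$. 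Choosing $G$ small enough to meet the tolerances $\epsilon$ and $\delta$ of the \textsc{Gcircuit} problem, we extract an $(\epsilon,\delta)$-satisfying assignment, and Conjecture \ref{conj:pcp} delivers PPAD-hardness (together with the corresponding $2^{\tilde\Omega(n)}$ lower bound).

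The main obstacle is closing the loop between a local constraint-satisfaction notion (gate satisfaction) and a global statistical notion (Gini index). On the completeness side, the reduction must bound \emph{all} incomes, including those produced by unsatisfied gadgets, which requires a gadget-level verification and possibly a mild normalization of the OPR14 construction so that misbehaving agents cannot receive unboundedly large incomes. On the soundness side, one must convert a small Gini coefficient into pointwise concentration of incomes around the mean for most agents, and then verify that the OPR14 gadgets are robust enough that pointwise concentration at an agent implies approximate satisfaction of its gate. Neither step requires new techniques beyond those already in the paper, but both demand careful bookkeeping of the parameters $\alpha^{*}$, $\beta$, $\epsilon$, $\delta$, and $G$.
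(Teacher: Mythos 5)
Your overall plan --- route through the OPR14 course-allocation gadgets and convert the fraction of outlier incomes into a Gini bound --- matches the paper's, and the Markov-on-$|b_i-\mu|$ idea can be made to work. But the soundness step is misframed in a way that matters. You write that "each such agent's income determines its gate's constraint up to additive error $O(\sqrt{G})$," yet that is not how the gadgets function: in OPR14 the \textsc{Gcircuit} variables are encoded in the \emph{prices}, not the incomes. The gadget lemma (see the NOT-gate gadget reproduced in the appendix) states that in any allocation with clearing error at most $\alpha^*$, either the gate's \emph{price} constraint $p^*_{1-x}\in[1-p^*_x,\,1-p^*_x+\epsilon']$ is satisfied, or a constant fraction of the gadget's $\Theta(\alpha^*)$ students have budgets outside $[1,1+\epsilon']$ (median normalized to $1+\epsilon'/2$). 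The correct chain is therefore: small Gini $\Rightarrow$ few budget outliers $\Rightarrow$ (gadget lemma, applied gate by gate) few violated gates $\Rightarrow$ the \emph{price} assignment $\epsilon$-satisfies all but a $\delta$-fraction of constraints. As written, your argument never establishes the middle implication, and "income determines the gate constraint" is not a substitute for it.

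Your "completeness" claim is also off: the assertion that "any $\alpha^*$-clearing solution has at least a $(1-\Theta(\delta))$-fraction of incomes in $[1,1+\beta]$ and the rest in $[0,B]$" is neither necessary nor true --- nothing in the course-allocation instance bounds the incomes of an arbitrary clearing solution. Completeness only requires the \emph{existence} of one low-Gini, low-clearing-error solution, and that is immediate from Budish's theorem: an $(\alpha^*,\beta)$-CEEI exists for every $\beta>0$, with all budgets in $[1,1+\beta]$, hence Gini $O(\beta)$; no "mild normalization" or gadget-level income bound is needed. A smaller point: the paper normalizes by the median and proves a direct Lorenz-curve estimate (Lemma \ref{lem:gini}, giving $\gamma=\Theta(\delta'\epsilon')$); your Markov-on-mean route also works, since small Gini forces mean and median close, but the symmetric $\sqrt{G}$-for-$\sqrt{G}$ split loses parameters --- splitting the Markov budget as $\delta'$ for the fraction and $\epsilon'$ for the deviation recovers $G=O(\delta'\epsilon')$, matching the paper.
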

%

\ifFULL
See the appendix for the precise definitions and statement of the result.
\fi

\section{Discussion}
The purpose of this paper is to showcase an important open problem:
\begin{quote}
{\bf reduce {\sc End of the Line} to $(\epsilon,\delta)$-{\sc Gcircuit}}
\end{quote}
--- i.e., show that Conjecture \ref{conj:eth} implies Conjecture \ref{conj:pcp}.
As we mentioned, such a reduction would imply a ``PCP for \PPAD"; 
i.e. it would imply that there exists a probabilistically checkable proof for a \PPAD-complete problem.

The equivalent statement for the class of \NP-complete problem, 
is the celebrated PCP Theorem \cite{AS98-PCP, ALMSS98-PCP}.
What can we learn from our experience with constructing PCP's for \NP?
First of all, we have many different constructions of PCPs for \NP, 
and this may be seen as circumstantial evidence that constructing a PCP for \PPAD~is possible.
Equally important, there are several constructions of PCPs for \NP~that are near-linear 
(e.g. \cite{bSGHSV06-quasilienar_PCP, bSS06-quasilinear_PCP, Dinur07-PCP, MR10-quasilinear_PCP});
notice that for the application to $\epsilon$-Nash in bimatrix games we need the proof length to be sub-quadratic.

The next question one should ask
is, whether we can adapt the techniques used in the proofs of the PCP Theorem 
to the class of \PPAD-complete problems.
We briefly and informally sketch some of our thoughts on the matter.
All the proofs of the PCP Theorems that we are aware of (including \cite{AS98-PCP, ALMSS98-PCP, Dinur07-PCP})
compose an {\em inner verifier} with an {\em outer verifier}. 
The outer verifier in Dinur's proof \cite{Dinur07-PCP} is combinatorial in nature,
and it seems plausible that the same or similar techniques could be modified to fit the generalized circuit graph.
Unfortunately, all inner verifiers that we know are discrete in nature,
whereas our characterization of \PPAD~with {\sc Gcircuit} (or Nash)
is inherently based on continuous constraints.
Thus the following interesting (and somewhat open-ended) questions arise:
Can we find a characterization of \PPAD~via a constraint satisfaction problem  
whose constraints are discrete in nature?
Alternatively, can one construct an inner (non-efficient) verifier 
using the constraints specified in the definition of the {\sc Gcircuit} problem?

\paragraph{Acknowledgments:}  Many thanks to Boaz Barak, Paul Cristiano, Muli Safra, and Madhu Sudan for inspiring discussions about probabilistic checkable proofs. 
Special thanks to Mark Braverman for pointing out a subtlety in the original proof of Theorem 2 in the conference proceedings.

\bibliographystyle{alpha}
\bibliography{PCP}

\appendix
\ignore{
\section{Random partitions}
\begin{thm}
[Generalized Chernoff bound \cite{PS97-chernoff}\footnote{This formulation of the theorem is from \cite{IK10-chernoff}.}]
Let $x_{1}\dots x_{n}$ be Boolean (possibly correlated) random variables
such that for some $0<\delta<1$, for all $S\subseteq\left[n\right]$,
$\Pr\left[\wedge_{i\in S}\left(x_{i}=1\right)\right]\leq\delta^{\left|S\right|}$.
Then for any $0<\delta<\gamma<1$,
\[
\Pr\left[\sum_{i=1}^{n}x_{i}\geq\gamma n\right]\leq\e^{-2D_{\mathrm{KL}}\left(\gamma\Vert\delta\right)}\leq\e^{-2n\left(\gamma-\delta\right)^{2}}\mbox{.}
\]
\end{thm}
\ifFULL
	\begin{lem}
	\label{lem:random-partition}
\else
	\begin{lem*}
	[Lemma \ref{lem:random-partition}]
\fi
Let $G=\left(U,V,E\right)$ be a bipartite
$d$-regular graph with $n=\left|U\right|=\left|V\right|$. Let $S_{1},\dots,S_{n/k}$
and $T_{1},\dots,T_{n/k}$ be partitions of $U$ and $V$ to disjoint
subsets of size $k=\sqrt{n}\log n$ chosen uniformly at random. Then
w.h.p.
\[
\forall i,j\in\left[n/k\right]\,\,\,\left|\left(S_{i}\times T_{j}\right)\cap E\right|<2dk^{2}/n\mbox{.}
\]
\ifFULL
	\end{lem}
\else
	\end{lem*}
\fi
\begin{proof}
By Hall's marriage theorem, $E$ can be partitioned into $d$ disjoint
complete matchings $M_{1},\dots,M_{d}$ over $U$ and $V$. Thus it
suffices to prove that w.h.p. the number of edges from each match
concentrates, and then take a union bound. Assume wlog that $M_{1}$
matches $S_{i}$ to the subset $\left\{ v_{1},\dots,v_{k}\right\} $,
and let $x_{l}$, for $l\in\left[k\right]$, denote the event that
$v_{l}\in T_{j}$. Thus, we must show that $\sum x_{l}$ concentrates.
The probability of each $x_{l}$ is $k/n$. Conditioning on $x_{l}$,
the probability of $x_{l'}$ is $\left(k-1\right)/\left(n-1\right)<k/n$.
Similarly, by induction we have that 
\[
\forall R\in\left[k\right]\,\,\,\Pr\left[\wedge_{l\in R}\left(x_{l}=1\right)\right]\leq k^{\left|R\right|}/n^{\left|R\right|}\mbox{.}
\]
Therefore, by the Generalized Chernoff Bound, we have
\[
\Pr\left[\left(S_{i}\times T_{j}\right)\cap M_{1}>2k^{2}/n\right]<\e^{-2k^{2}/n}\mbox{.}
\]
The lemma follows by taking a union bound over the $d$ matchings
and all pairs of $i$ and $j$.\end{proof}
}

\section{Finding a good partition}

\ifFULL
	\begin{lem}
	\label{lem:derandomized-partition}
\else
	\begin{lem*}
	[Lemma \ref{lem:derandomized-partition}]
\fi
Let $G=\left(U,V,E\right)$ be a bipartite
$d$-regular graph with $n=\left|U\right|=\left|V\right|$. 
Then we can efficiently find partitions $S_{1},\dots,S_{n/k}$
and $T_{1},\dots,T_{n/k}$ of $U$ and $V$, respectively, to disjoint
subsets, such that each subset has size at most $2k$, and:
\[
\forall i,j\in\left[n/k\right]\,\,\,\left|\left(S_{i}\times T_{j}\right)\cap E\right|\leq 2d^2k^{2}/n\mbox{.}
\]
\ifFULL
	\end{lem}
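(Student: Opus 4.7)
The plan is a greedy construction in two stages. First, partition $U$ arbitrarily into equal-size subsets $S_1,\ldots,S_{n/k}$ of size exactly $k=\sqrt{n}$; the $U$-side requirement is then trivially met, and the whole burden is shifted to the $V$-side construction. Second, build $T_1,\ldots,T_{n/k}$ by processing the vertices of $V$ one at a time and inserting each $v$ into some $T_j$ while maintaining the invariants $|T_j|\le 2k$ and $|(S_i\times T_j)\cap E|<M$ for every pair $(i,j)$, where $M:=2d^{2}k^{2}/n$. With this setup, the entire proof reduces to showing that at every step of the incremental construction, at least one $T_j$ can absorb the current vertex $v$.

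Fix the current vertex $v\in V$, write $d_i(v)$ for the number of neighbors of $v$ in $S_i$, and let $D(v)=\{i:d_i(v)>0\}$, so that $|D(v)|\le d$ and $\sum_i d_i(v)=d$. Call $T_j$ \emph{blocked} if inserting $v$ into $T_j$ would violate an invariant, and split the blocked indices into a size-blocked part $A:=\{j:|T_j|=2k\}$ and an edge-blocked part $B:=\{j:\exists\,i\in D(v),\ |(S_i\times T_j)\cap E|+d_i(v)\ge M\}$. It suffices to show $|A|+|B|<n/k$. For $A$, a direct double-count using the fact that fewer than $n$ vertices have been placed gives $|A|<n/(2k)$. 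For $B$, I would apply Markov's inequality fiberwise: for each $i$, the total edge count $\sum_j|(S_i\times T_j)\cap E|$ is at most $dk$ (the number of edges leaving $S_i$), so the number of $j$'s with $|(S_i\times T_j)\cap E|\ge M-d$ is at most $dk/(M-d)$. Weighting each bad pair $(i,j)$ by $d_i(v)$ and summing gives
\[
(M-d)\,|B|\;\le\;\sum_{i\in D(v)}d_i(v)\sum_j|(S_i\times T_j)\cap E|\;\le\;d\cdot dk\;=\;d^{2}k,
\]
so that $|B|\le d^{2}k/(M-d)$. Plugging $k=\sqrt{n}$ and $M=2d^{2}$ into the estimates on $|A|$ and $|B|$ then shows $|A|+|B|<n/k$, exhibiting a $j\notin A\cup B$ that can host $v$ and closing the induction.

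The step I expect to require the most care is the estimate on $|B|$. A naive union bound that treats the at most $|D(v)|$ indices independently only yields $|B|\le d^{2}k/(M-d)$ via the bound $|D(v)|\le d$ and is quite tight against $|A|\le n/(2k)$; the weighted Markov above, which exploits $\sum_{i\in D(v)}d_i(v)=d$ rather than the looser $|D(v)|\le d$, is exactly what recovers the missing factor of $d$ and makes the sum $|A|+|B|$ strictly smaller than $n/k$. If the constants still feel tight in a full write-up one may either absorb a small constant factor into $M$ (replacing the threshold $2d^{2}k^{2}/n$ by, say, $3d^{2}k^{2}/n$) or, equivalently, weaken the edge-invariant by an additive $O(d)$ so that the Markov estimate closes with room to spare; neither change affects the downstream application, which uses only the $O(d^{2}k^{2}/n)$ scaling of the edge bound (specialized to the constant $18$ for the cubic bipartite graph with $d=3$).
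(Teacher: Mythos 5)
Your overall plan---partition $U$ arbitrarily into size-$k$ subsets, then greedily place each $v\in V$ into some $T_j$ while maintaining the size and edge invariants, and show at each step that not every $T_j$ is blocked---is exactly the paper's proof. You are more careful than the paper about one point: to preserve the invariant $|(S_i\times T_j)\cap E|<M$ \emph{after} placing $v$, a subset $T_j$ must already be counted as edge-blocked when the pre-placement count reaches $M-d_i(v)$, not $M$; the paper's write-up applies Markov at threshold $M$ itself, which (read literally) makes the count of edge-blocked subsets vacuously zero under the invariant and says nothing about the post-placement count. So the care you took is genuinely warranted, and the paper's own sketch has the same soft spot you are addressing.

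Unfortunately, that extra care also exposes that your closing arithmetic does not actually close. With $k=\sqrt{n}$ you have $M=2d^2$, $|A|<n/(2k)=k/2$, and $|B|\le d^2k/(M-d)=dk/(2d-1)$; their sum is bounded only by $k\left(\frac{1}{2}+\frac{d}{2d-1}\right)$, and since $\frac{d}{2d-1}>\frac12$ for every $d\ge 1$, this is strictly greater than $k=n/k$ (for $d=3$ it is $11k/10$). So the claimed inequality $|A|+|B|<n/k$ is false as stated. Moreover, the ``weighted Markov'' step does not buy you the factor of $d$ you say it recovers: dropping the weights $d_i(v)\ge 1$ only tightens the bound, and the unweighted double count $(M-d)|B|\le\sum_{j\in B}\sum_{i\in D(v)}|(S_i\times T_j)\cap E|\le |D(v)|\,dk\le d^2k$ yields precisely the same estimate on $|B|$; the weighting is a red herring. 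You do flag that the constants ``feel tight'' and you correctly identify the fix---enlarge $M$ by an additive $O(d)$, or equivalently replace $2d^2k^2/n$ by a slightly larger constant multiple, which is all the downstream argument needs (it only uses the $\Theta(d^2k^2/n)$ scaling and the normalization by that constant). But that adjustment is not optional: as written, the final step of your proposal is an overclaim, and the lemma's stated constant $2$ is not delivered by the greedy argument without it.
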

\else
	\end{lem*}
\fi
\begin{proof}
Let $S_{1},\dots,S_{n/k}$ be an arbitrary partition of $U$ into disjoint subsets of size exactly $k$.
We inductively place the vertices of $V$ into subsets $T_{1},\dots,T_{n/k}$,
while maintaining the desiderata that each subset $T_j$ is of size at most $2k$,
and for every $i,j$ the number of edges from $S_i$ to $T_j$ is at most $2d^2k^2/n$.

It is left to show that for any partial partition of $V$, there is a subset $T_J$ into which we can place the next vertex $v$.
In expectation, every subset $T_j$ has less than $k$ vertices. 
Therefore by Markov's inequality, less than half of the subsets have $2k$ vertices or more.
$v$ has neighbors in at most $d$ subsets $S_i$. 
Recall that the $S_i$'s are of size exactly $k$. 
Thus each $S_i$ with a neighbor of $v$ has, in expectation over $j$, less than $dk^2/n$ neighbors in $T_j$.
Using Markov's inequality again, less than a $1/(2d)$-fraction of the subsets $T_j$ contain $2d^2k^2/n$ neighbors of $S_i$. 
In total, we lose less than half the subsets for the size desideratum, 
and less than $1/(2d)$-fraction of the subsets for each $S_i$ containing a neighbor of $v$.
Therefore there always remains at least one subset $T_j$ to which we can add $v$. 
\end{proof}

\section{The course allocation problem}
Even though the approximate CEEI and the existence theorem in \cite{ACEEI_Bud11} are applicable to a broad range of allocation problems, we shall describe our results in the language of the course allocation problem.

We are given a set of $M$ courses with integer capacities (the supply) $(q_{j})_{j=1}^{M}$, and a set of $N$ students, where each student $i$ has a set $\Psi_{i}\subseteq 2^{M}$
of permissible course bundles, with each bundle containing at most $k \leq M$ courses.  The set $\Psi_{i}$
encodes both scheduling constraints (e.g., courses that meet
at the same time) and any constraints specific to student $i$ (e.g. prerequisites).

Each student $i$ has a strict ordering over her permissible schedules, denoted by $\preccurlyeq_{i}$.  We allow arbitrarily complex preferences --- in particular,
students may regard courses as substitutes or complements.  More formally:

\begin{defn} \textbf{Course Allocation Problem}
The input to a course allocation problem consists of:
\begin{itemize}
\item For each student $i$ a set of course bundles $\left(\Psi_{i}\right)_{i=1}^{N}$.
\item The students' reported preferences, $\left(\preccurlyeq_{i}\right)_{i=1}^{N}$,
\item The course capacities, $\left( q_j \right)_{j=1}^{M}$, and
\end{itemize}

The output to a course allocation problem consists of:
\begin{itemize}
\item Prices for each course $(p_{j}^{*})_{j=1}^{M}$,
\item Allocations for each student$(x_{i}^{*})_{i=1}^{N}$, and
\item Budgets for each student $\left(b_{i}^{*}\right)_{i=1}^{N}$.
\end{itemize}

\end{defn}

The quality of an allocation is evaluated based on its proximity to market clearing and income equality.
As for the definition of market clearing error,
it suffices for our purposes to require that no course is over-subscribed or under-subscribed by more than $\alpha^*$ students
(we refer the curious reader to \cite{ACEEI_Bud11} for the precise definition).
The Gini coefficient, which measures income inequality, is discussed in the following subsection.

\subsection{The Gini coefficient}

\begin{defn}[Gini index]
Given a distribution of incomes $D$,
the {\em Lorenz curve} plots, for each $x \in [0,1]$, the cumulative wealth
owned by the bottom $x$-fraction of the population. 
Let $F_D^{-1}(x) = \sup \big\{y \colon \Pr_{z\sim D}[z \leq y] \leq x \big\}$,
and define $L_D(x) = \left( \int_0^x F_D^{-1}(x) \right) / \left( \int_0^1 F_D^{-1}(x) \right)$.
Then the Lorenz curve is the graph $\big(x,L_D(x)\big)$, for $x \in [0,1]$.
Notice that $F_D^{-1}(x)$ is monotonically non-decreasing, so the Lorenz curve is convex.

The {\em Gini index} is defined as the ratio of the area between the $\big(x,x\big)$ line
and the Lorenz curve (by the convexity, $x \geq L_D(x) \forall x \in [0,1]$),
divided by the entire area under the $\big(x,x\big)$ line (the latter is always $1/2$):
$G_D = 2 \int_0^1 (x - L_D(x)) dx = 1 - 2 \int_0^1 L_D(x) dx$
\end{defn}

In general, a smaller Gini index corresponds to a more equal distribution of wealth.
For example, when all incomes are exactly equal, 
the Lorenz curve is exactly equal to the $\big(x,x\big)$ line,
and the Gini index is $0$.
On the other extreme, when one person has all the wealth,
the area under the Lorenz curve goes to $0$ as the population size increases,
and the Gini index approaches $1$.

\subsection{Intractability of approximate CEEI with near-optimal Gini index}

\begin{thm*} [Theorem \ref{thm:ceei-informal}, formal]
Assuming Conjecture \ref{conj:pcp}, there exists some constant
$\gamma>0$ such that finding an allocation with market clearing error
$\alpha^{*}$ and Gini coefficient $\gamma$ is \PPAD-complete.
\end{thm*}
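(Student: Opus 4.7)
The plan is to combine Conjecture \ref{conj:pcp} with the reduction of Othman, Papadimitriou, and Rubinstein \cite{OPR14_A-CEEI} from $\epsilon$-{\sc Gcircuit} with fan-out 2 to $(\alpha^*,\beta)$-CEEI, together with a quantitative relationship between the Gini index and the fraction of ``deviant'' incomes. The first step is to observe that the reduction of \cite{OPR14_A-CEEI} is \emph{gate-local}: each circuit gate is encoded by a small gadget with a constant number of students, and each gate is $\epsilon$-approximately satisfied if and only if the gadget's students all receive incomes within $\beta = \Theta(\epsilon/\log(1/\epsilon))$ of 1. Applied to an instance of $(\epsilon,\delta)$-{\sc Gcircuit} (which is hard under Conjecture \ref{conj:pcp}), this locality gives the following property of the resulting CEEI instance: if all but a $\delta'$-fraction of students have incomes in $[1,1+\beta]$, then all but a $\Theta(\delta')$-fraction of gates are $\epsilon$-approximately satisfied, since each deviant income can affect only a constant number of gate-gadgets.

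The second step is to convert a bound on the Gini index into a bound on the fraction of deviant incomes. The key quantitative fact I would establish is this: if the maximum income is at most $1+\beta_{\max}$ and at least a $\delta'$-fraction of the incomes lie outside $[1,1+\beta]$, then the Gini index is at least $\Omega\bigl(\delta'\beta/\beta_{\max}\bigr)$. This follows from a direct computation of the area between the Lorenz curve and the diagonal: a $\delta'$-mass of incomes bounded away from the mean by $\beta$ produces a deflection of the Lorenz curve of area $\Omega(\delta'\beta/\beta_{\max})$. Setting $\gamma = c \cdot \delta \cdot \beta/\beta_{\max}$ for a sufficiently small constant $c$, any allocation with Gini index at most $\gamma$ must therefore induce an $(\epsilon,\delta)$-approximately satisfying assignment to the original circuit, which is a contradiction. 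Since membership in \PPAD\ is inherited from the existence proof of $(\alpha^*,\beta)$-CEEI in \cite{ACEEI_Bud11}, this completes the reduction.

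The main technical obstacle will be controlling $\beta_{\max}$ in the instance produced by \cite{OPR14_A-CEEI}. Because the Gini index is scale-invariant, a few extreme outlier incomes could in principle inflate $\beta_{\max}$ and weaken the quantitative bound above without being visible in the Lorenz curve. To address this, I would pad the reduction's output with ``dummy'' students and courses whose budgets are fixed at 1 and whose preferences do not interact with the circuit-encoding gadgets, so that any $(\alpha^*,\cdot)$-CEEI of the padded instance must assign uniformly bounded incomes to the dummies; combined with the clearing condition, this forces $\beta_{\max} = O(1)$ globally while preserving gate-locality. Verifying that the padding indeed forces a constant bound on $\beta_{\max}$, and that the gadgets of \cite{OPR14_A-CEEI} continue to function in the presence of the dummies, will be the bulk of the technical work; once both are in place, Conjecture \ref{conj:pcp} yields the claimed \PPAD-completeness for the constant $\gamma$ chosen above.
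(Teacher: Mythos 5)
Your high-level plan (exploit the gate-locality of the \cite{OPR14_A-CEEI} reduction, then convert ``many deviant incomes'' into a Gini lower bound via the Lorenz curve) is the same as the paper's. However, the specific quantitative lemma you propose, and the extra machinery you invoke to make it usable, both have problems.

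You state the Gini bound as $\Omega(\delta'\beta/\beta_{\max})$ and then worry about controlling $\beta_{\max}$. This dependence on $\beta_{\max}$ is a red herring, and the intuition behind it is backwards. The Gini index cannot be \emph{weakened} by extreme outliers in the income distribution: an extreme upper-tail outlier inflates the total income (the denominator in the Lorenz curve), depressing $L(x)$ for all $x$ below the outlier and hence \emph{increasing} the Gini index; an extreme lower-tail outlier (bounded below by $0$) depresses the poorer population's cumulative share, again increasing the Gini index. The paper's Lemma \ref{lem:gini} exploits exactly this robustness: it normalizes the \emph{median} income (not the mean or the max) to $1+\epsilon'/2$, which pins the poorer half's total at most $(1/2)(1+\epsilon'/2) - \delta'\epsilon'/2$ while the richer half's total is at least $(1/2)(1+\epsilon'/2)$. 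Since $L(1/2)$ is the ratio of the former to the total, and since extreme values on either side can only push that ratio down, one gets $L(1/2) \le 1/2 - \Theta(\delta'\epsilon')$ \emph{unconditionally}, with no hypothesis on $\beta_{\max}$. Convexity of the Lorenz curve then turns the bound on $L(1/2)$ into a Gini bound of $\Omega(\delta'\epsilon')$ via a trapezoid estimate. No assumption on the range of incomes is required, and the constant $\gamma$ falls out directly.

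Because of this, your padding step is not just unnecessary, it also does not work as described. Budgets (incomes) are part of the CEEI \emph{output}, not the input, so you cannot declare dummy students to have ``budgets fixed at $1$''; any constraint on dummy budgets would have to be enforced indirectly through prices and the clearing condition, which requires the dummies to \emph{interact} with the rest of the market --- contradicting your requirement that they be disjoint from the circuit gadgets. And even if you could somehow pin dummy budgets at $1$, that would not cap the budgets of the gadget students, since the gadget students' budgets are determined by different constraints. So as written, the padding step contains a genuine gap; fortunately, once you switch to the median normalization and use the $L(1/2)$ argument, the whole step evaporates.

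One further small point: the first part of your argument (the gate-locality claim ``$\delta'$-fraction of deviant incomes implies $\Theta(\delta')$-fraction of violated gates'') is stated in the direction opposite to how it is actually used. What the paper needs, and proves, is the contrapositive: if the assignment does \emph{not} $(\epsilon,\delta)$-approximately satisfy the circuit, then (after median-normalization) a $\delta'$-fraction of students have incomes at most $1$ or at least $1+\epsilon'$. You should make sure the per-gadget lemma you rely on is phrased in that direction.
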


The rest of this section is devoted to sketching a proof of Theorem \ref{thm:ceei-informal}.
In the next subsection, we briefly outline the reduction of \cite{OPR14_A-CEEI} 
from generalized circuits with fan-out $2$ to approximate CEEI.
(Recall that in Section \ref{sec:weakNash} we show how to convert any generalized circuit to one with fan-out $2$.)
Then, we show that after normalizing the median budget to $1+\epsilon'/2$, 
in any allocation which does not correspond 
to a valid solution to the $(\epsilon,\delta)$-{\sc Gcircuit} instance, 
a $\delta'$-fraction of the students have budgets either 
at most $1$ or at least $1+\epsilon'$, 
(for some constants $\epsilon'=\Theta\left(\epsilon / \log(1/\epsilon)\right)$ and 
$\delta'=\Theta\left(\delta / \log(1/\epsilon)\right)$).
Then, the proof is complete with the following lemma:

\begin{lem}
\label{lem:gini}
Let the median income be $1+\epsilon'/2$, 
and suppose that a $\delta'$-fraction of the population 
has income at most $1$ (resp. at least $1+\epsilon'$).
Then the Gini index is at least $\gamma = \Theta(\delta'\cdot \epsilon') = \Theta(\delta\cdot \epsilon /\log^2(1/\epsilon))$.
\end{lem}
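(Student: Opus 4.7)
The plan is to work directly from the Lorenz-curve definition of $G_D$: I will exhibit a single point $x_0 \in (0,1)$ at which the Lorenz gap $g(x) := x - L_D(x)$ is at least $\Omega(\delta'\epsilon')$, and then leverage the concavity of $g$ to convert this pointwise bound into a bound on $G_D = 2\int_0^1 g(x)\,dx$ via a tent-function comparison.

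For the first case of the hypothesis, take $x_0 = \delta'$. In the CEEI setting all incomes lie in $[1, 1+\beta]$ with $\beta$ a small constant, so the $\delta'$-fraction with income at most $1$ in fact has income exactly $1$; hence $F_D^{-1}(t) = 1$ on $[0, \delta']$, giving $L_D(\delta') = \delta'/\mu$. The median condition forces $\mu \geq \tfrac{1}{2} \cdot 1 + \tfrac{1}{2}(1+\epsilon'/2) = 1 + \epsilon'/4$, so
\[
g(\delta') \;=\; \delta' \cdot \frac{\mu-1}{\mu} \;\geq\; \frac{\delta'\epsilon'}{4\mu}.
\]
For the symmetric case take $x_0 = 1 - \delta'$: the upper $\delta'$-fraction contributes at least $\delta'(1+\epsilon')$ to $\int_0^1 F_D^{-1}$, so $L_D(1-\delta') \leq 1 - \delta'(1+\epsilon')/\mu$, and since $\beta = \Theta(\epsilon/\log(1/\epsilon)) = \Theta(\epsilon')$ in the reduction of \cite{OPR14_A-CEEI}, we have $\mu \leq 1 + \epsilon'/4 + \beta/2 < 1 + \epsilon'$, which yields $g(1-\delta') = \Omega(\delta'\epsilon'/\mu)$ as well.

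Since $L_D$ is convex (its derivative $F_D^{-1}/\mu$ is non-decreasing), $g$ is concave on $[0,1]$ with $g(0) = g(1) = 0$. Applying concavity anchored at the endpoints and at $x_0$ gives $g(x) \geq g(x_0)\cdot x/x_0$ on $[0,x_0]$ and $g(x) \geq g(x_0)(1-x)/(1-x_0)$ on $[x_0,1]$, so $g$ dominates pointwise the piecewise-linear tent of height $g(x_0)$ with apex at $x_0$. Integrating the tent yields $\int_0^1 g \geq g(x_0)/2$, whence $G_D = 2\int_0^1 g \geq g(x_0) = \Omega(\delta'\epsilon')$ using $\mu = O(1)$. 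Plugging in $\epsilon' = \Theta(\epsilon/\log(1/\epsilon))$ and $\delta' = \Theta(\delta/\log(1/\epsilon))$ delivers $\gamma = \Theta(\delta \epsilon / \log^2(1/\epsilon))$. The main subtlety, and the primary obstacle to a fully polished write-up, is the bookkeeping of the relative sizes of $\beta$, $\epsilon'$, and $\delta'$ from the CEEI reduction so that the symmetric case goes through with the same asymptotic gap; modulo this, the argument is essentially a two-line concavity estimate.
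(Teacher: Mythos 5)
Your concavity/tent argument at the end is correct, and it is in fact mathematically identical to the paper's ``elementary geometry'' step: bounding $\int_0^{1/2} L$ and $\int_{1/2}^1 L$ by trapezoids using convexity of $L$ \emph{is} the tent bound $\int_0^1 g \geq g(x_0)/2$ with $g = x - L$ and $x_0 = 1/2$. The divergence --- and the gap --- is in your choice of evaluation point. You pick $x_0 = \delta'$ and need $g(\delta') \geq \delta'(\mu - 1)/\mu = \Omega(\delta'\epsilon')$, which requires the mean $\mu \geq 1 + \epsilon'/4$. But the lemma's hypotheses (median $= 1+\epsilon'/2$, a $\delta'$-fraction with income $\leq 1$) give no lower bound on incomes at all: a distribution with the poorer half at income $0$ and the richer half exactly at the median has $\mu = \tfrac12(1+\epsilon'/2) < 1$, making your bound on $g(\delta')$ vacuous (negative). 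You patch this by importing the CEEI-specific fact that all incomes lie in a narrow band around $1$, and by claiming the $\delta'$-fraction in fact has income \emph{exactly} $1$ --- neither of which is part of the lemma's statement, and the latter is not even quite right after the paper's renormalization to median $1+\epsilon'/2$. The symmetric case at $x_0 = 1-\delta'$ has the mirror problem: it needs an \emph{upper} bound $\mu < 1 + \epsilon'$, which again requires external control on the income range and a delicate $\beta$-vs-$\epsilon'$ comparison.

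The paper sidesteps all of this by evaluating at $x_0 = 1/2$, the median. There the bound on $L(1/2)$ needs only the stated hypotheses: the poorer half's total income is at most $(1/2-\delta')(1+\epsilon'/2) + \delta' = \tfrac12(1+\epsilon'/2) - \delta'\epsilon'/2$ (everyone below the median earns at most the median, and a $\delta'$-fraction earns at most $1$), while the richer half's total is at least $\tfrac12(1+\epsilon'/2)$. Taking the ratio gives $L(1/2) \leq \tfrac12 - \Theta(\delta'\epsilon')$ with no reference to the mean or to global income bounds, and then your tent estimate finishes. So the fix to your writeup is simple: keep the tent argument, but apply it at the median rather than at $\delta'$ or $1-\delta'$; this removes the dependence on ``bookkeeping of the relative sizes of $\beta$, $\epsilon'$, and $\delta'$'' entirely, which you correctly flagged as the weak point.
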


\begin{proof}
The total income of the poorer half of the population is at most $(1/2-\delta')(1+\epsilon'/2) + \delta' =  (1/2)(1+\epsilon'/2) - \delta'\cdot\epsilon'/2$,
whereas the richer half of the population has a total income of at least $(1/2)(1+\epsilon'/2)$. 
Therefore, the Lorenz curve's value at $1/2$ is bounded by 
\begin{gather*}
L(1/2) \leq \frac{(1/2)(1+\epsilon'/2) - \delta'\cdot\epsilon'/2} {(1+\epsilon'/2) - \delta'\cdot\epsilon'/2} = 1/2-\Theta(\delta' \cdot \epsilon').
\end{gather*}
We can now use elementary geometry to bound the area under the Lorenz curve:
\begin{align*}
 \int_0^1 L(x)dx &=\int_0^{1/2} L(x)dx + \int_{1/2}^1 L(x)dx\\
& \leq \frac{1}{4} \Big[0 + \big(1/2-\Theta(\delta' \cdot \epsilon')\big)\Big] 
+ \frac{1}{4} \Big[ \big(1/2-\Theta(\delta' \cdot \epsilon')\big) + 1 \Big]\\
& = 1/2 - \Theta(\delta' \cdot \epsilon').
\end{align*}
Therefore the Gini index is at least $\Theta(\delta' \cdot \epsilon')$. A similar argument works when a $\delta'$-fraction of the population has income at least $1+\epsilon'$.

\end{proof}

\subsection{From generalized circuits to Course Allocation}

\cite{OPR14_A-CEEI} reduce generalized circuits to Course Allocation,
by constructing a gadget for each gate of the generalized circuit. 
(In fact, a few gadgets per gate are required to handle a subtle issue that \cite{OPR14_A-CEEI} call "course-size amplification".)
We provide the gadget for the NOT gate as an example,
and then describe the properties of the \cite{OPR14_A-CEEI}'s reduction that we need to complete the proof.

We henceforth normalize the prices and budgets in every assignment to the Course Allocation instance 
such that the median budget is $1+\epsilon'/2$.

\begin{lem} [Essentially \cite{OPR14_A-CEEI}]
Let $n_{x}>6\alpha^*$ and suppose that the economy contains the following
courses:
\begin{itemize}
\item $c_{x}$ (the ``input course'') ;
\item $c_{{1-x}}$ with capacity $q_{{1-x}}=2n_{x}/3$ (the
``output course'');
\end{itemize}

and the following set of students:
\begin{itemize}
\item \begin{flushleft}
$n_{x}$ students interested only in the schedule $\left\{ c_{x},c_{{1-x}}\right\} $;
\par\end{flushleft}
\end{itemize}

and suppose further that at most $n_{{1-x}}=n_{x}/6$ other
students are interested in course $c_{{1-x}}$.

Then in any normalized approximate CEEI with market clearing error at most $\alpha^*$, 
at least one of the following must hold:
\begin{description}
\item[The gate is $\epsilon'$-satisfied]
\[
p_{{1-x}}^{*}\in\left[1-p_{x}^{*},1-p_{x}^{*}+\epsilon'\right]
\]
\item[The gadget contributes to income inequality]
A constant fraction of the $n_x$ students have budgets either less than $1$ or greater than $1+\epsilon'$.
\end{description}

\end{lem}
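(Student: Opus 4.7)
The plan is to combine the market-clearing constraint with the fact that the $n_x$ bundle students have only a single permissible schedule, forcing their participation decisions to reveal whether their budgets lie below or above the bundle price $p_x^*+p_{1-x}^*$. Under the normalization that the median budget is $1+\epsilon'/2$, the $\epsilon'$-satisfied condition $p_{1-x}^* \in [1-p_x^*,\,1-p_x^*+\epsilon']$ is precisely $p_x^*+p_{1-x}^* \in [1,\,1+\epsilon']$; outside this window the bundle is either ``too cheap'' or ``too expensive'' relative to the median, which will force a constant fraction of the $n_x$ students out of the desired budget band.

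First I would pin down how many of the $n_x$ students take the bundle in any approximate CEEI. The only demand for $c_{1-x}$ comes from the $n_x$ bundle students plus at most $n_{1-x}=n_x/6$ outsiders; with capacity $q_{1-x}=2n_x/3$ and market-clearing error $\alpha^*$, the number $k$ of bundle students who take $\{c_x,c_{1-x}\}$ must lie in the interval $k \in [\,2n_x/3 - \alpha^* - n_x/6,\; 2n_x/3 + \alpha^*\,] = [\,n_x/2 - \alpha^*,\; 2n_x/3 + \alpha^*\,]$. Since $n_x > 6\alpha^*$, both $k$ and $n_x-k$ are at least $n_x/6$, giving me two constant fractions: one that actually consumes the bundle, and one that does not.

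Next I would use the single-bundle structure: because $\{c_x,c_{1-x}\}$ is the unique permissible schedule for these students, each takes it if and only if her budget $b_i$ is at least $p_x^*+p_{1-x}^*$. Thus at least $n_x/6$ students have $b_i \geq p_x^*+p_{1-x}^*$ and at least $n_x/6$ have $b_i < p_x^*+p_{1-x}^*$. Now I case-split on the bundle price. If $p_x^*+p_{1-x}^* > 1+\epsilon'$, then at least $n_x/6$ students have $b_i > 1+\epsilon'$. If $p_x^*+p_{1-x}^* < 1$, then at least $n_x/6$ students have $b_i < 1$. In either case a constant fraction of the $n_x$ students falls outside $[1,1+\epsilon']$, yielding the second alternative of the lemma; the only remaining case is $p_x^*+p_{1-x}^* \in [1,1+\epsilon']$, which is exactly the first alternative.

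The main obstacle I anticipate is bookkeeping around the precise form of the approximate-CEEI relaxation used by \cite{OPR14_A-CEEI}: whether a student within an $\epsilon'$-slack of the cutoff $p_x^*+p_{1-x}^*$ may be consuming the bundle or not, and whether the normalization preserves the cutoff correctly. Both issues can be handled by allowing an additive $O(\epsilon')$ wiggle in the price comparisons, shrinking the constant fractions $n_x/6$ to $n_x/c$ for a slightly larger constant $c$ but leaving the statement intact. With the clearing slack $\alpha^*$ and the preference slack absorbed this way, the two alternatives of the lemma follow from the case analysis above.
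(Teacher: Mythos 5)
Your proof is correct and takes essentially the same approach as the paper's: both reduce to a case split on whether the bundle price $p_x^*+p_{1-x}^*$ lies below $1$ or above $1+\epsilon'$, and in each case use the market-clearing constraint on $c_{1-x}$ (with the $n_x/6$ slack from outsiders and $\alpha^*<n_x/6$) to force a constant fraction of the single-bundle students to have budgets outside $[1,1+\epsilon']$. The only stylistic difference is that you first pin down the number $k$ of bundle-takers to $[n_x/2-\alpha^*,\,2n_x/3+\alpha^*]$ for any price and then apply the case split, whereas the paper establishes the enrollment count separately within each case; the logic and constants are the same.
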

\begin{proof}
Observe that:
\begin{itemize}
\item If $p_{{1-x}}^{*}>1-p_{x}^{*}+\epsilon'$, then none of the $n_{x}$
students can afford the bundle $\left\{ c_{x},c_{{1-x}}\right\}$
 - except those whose budget is greater than $1+\epsilon'$.
Other than them, there are at most $n_{{1-x}}=n_{x}/6$ students
enrolled in the $c_{{1-x}}$ - much less than the capacity
$2n_{x}/3$. Therefore for the market clearing error to be less than $\alpha^* = n_x/6$,
$n_x/3$ of the students must have budget greater than $1+\epsilon'$
\item On the other hand, if $p_{{1-x}}^{*}<1-p_{x}^{*}$, then all
$n_{x}$ students can afford the bundle $\left\{ c_{x},c_{{1-x}}\right\}$
 - except for those whose budget is less than $1$.
Therefore in order to satisfy the market clearing requirement, at least $n_x/6$ students must have budget less than $1$.
\end{itemize}
\end{proof}

Similarly, \cite{OPR14_A-CEEI} provide gadgets for all the gates in the definition of the {\sc Gcircuit} problem,
and show that they can be concatenated to simulate the computation on the generalized circuit.
For each gate of the generalized circuit, \cite{OPR14_A-CEEI}'s reduction uses at most $\Theta(1/\log(1/\epsilon))$
(in particular, a constant) number of gadgets.
Furthermore, the number of students that participate in each of those gadgets is approximately the same ($\Theta(\alpha^*)$).
Therefore, for every gate which is not $\epsilon$-satisfied, 
there are at least $\Theta(\alpha^*)$ students whose budgets are either at most $1$ or at least $1+\epsilon'$.
Thus if the assignment to the Course Allocation problem does not correspond 
to an $(\epsilon, \delta)$-approximate solution to {\sc Gcircuit},
then a $\delta'=\Theta(\delta/\log(1/\epsilon))$-fraction of the students must have budgets at most $1$ or at least $1+\epsilon'$.
Applying Lemma \ref{lem:gini} completes the proof of Theorem \ref{thm:ceei-informal}.
\qed

\end{document}